\newtheorem{theorem}{Theorem}
\newtheorem{definition}{Definition}
\newtheorem{lemma}{Lemma}
\newtheorem{corollary}{Corollary}
\title{A Novel Prediction Setup for Online Speed-Scaling}
\author[1]{Antonios Antoniadis}
\author[2]{Peyman Jabbarzade Ganje}
\author[3]{Golnoosh Shahkarami}
\affil[1]{University of Twente, \href{mailto:a.antoniadis@utwente.nl}{a.antoniadis@utwente.nl}}
\affil[2]{Sharif University of Technology,
\href{mailto:jabarzade@ce.sharif.edu}{jabarzade@ce.sharif.edu}}
\affil[3]{Max Planck Institut für Informatik, Universität des Saarlandes, \href{mailto:gshahkar@mpi-inf.mpg.de}{gshahkar@mpi-inf.mpg.de}}
\date{}
\begin{document}
\maketitle

\begin{abstract}
    Given the rapid rise in energy demand by data centers and computing systems in general, it is fundamental to incorporate energy considerations when designing (scheduling) algorithms. Machine learning can be a useful approach in practice by predicting the future load of the system based on, for example, historical data. However, the effectiveness of such an approach highly depends on the quality of the predictions and can be quite far from optimal when predictions are sub-par. On the other hand, while providing a worst-case guarantee, classical online algorithms can be pessimistic for large classes of inputs arising in practice.
    
    This paper, in the spirit of the new area of machine learning augmented algorithms, attempts to obtain the best of both worlds for the classical, deadline based, online speed-scaling problem: Based on the introduction of a novel prediction setup, we develop algorithms that (i) obtain provably low energy-consumption in the presence of adequate predictions, and (ii) are robust against inadequate predictions, and (iii) are smooth, i.e., their performance gradually degrades as the prediction error increases.
 
\end{abstract}

\section{Introduction}
Energy is a major concern in society in general and computing environments in particular. Indeed, data centers alone are estimated to consume 200 terawatt-hours (TWh) per year, which is likely to increase by a factor of $15$ by year 2030~\cite{nature2018}.
Hardware manufacturers approach this problem by incorporating energy-saving capabilities into their hardware, with the most popular one being \emph{dynamic speed scaling}, i.e., one can adjust the speed of the processor or device. A higher speed implies a higher energy consumption but also more processing capacity. In contrast, a lower speed incurs energy savings while being able to perform less processing per unit of time. Naturally, to take advantage of this energy-saving capability, scheduling algorithms need to decide on what speed to use at each timepoint and consider the energy consumption of the produced schedule alongside more ``traditional'' quality-of-service considerations.
 
This paper studies online, deadline-based \emph{speed-scaling} scheduling, augmented with machine-learned predictions. More specifically, a set of jobs $\mathcal{J}$, each job $j\in\mathcal{J}$ with an associated release time $r_j$, deadline $d_j$ and processing requirement $w_j$, arrives online and has to be scheduled on a single speed-scalable processor. A scheduling algorithm needs to decide for each timepoint $t$ on: (i) the processor speed $s(t)$ and (ii) which job $j\in\mathcal{J}$ to execute at $t$ ($j(t)$). Both decisions have to be made by the algorithm at any timepoint $t$ while only having knowledge of the jobs with a release time equal to or less than $t$. A schedule is said to be feasible if the whole processing requirement of every job $j$ is executed within the respective release time and deadline interval, i.e., if $\int_{t: j(t) = j} s(t) dt \ge w_j$. 
The energy consumption of a schedule, which we seek to minimize over all feasible schedules, is given by $\int_{0}^{+\infty} s(t)^\alpha dt$, where $\alpha>1$ is a constant, which in practice is between $1.1$ and $3$ depending on the employed technology~\cite{CritchlowDS00,WiermanAT12}. 
The offline setting of the problem in which the complete job set $\mathcal{J}$ including their release times, deadlines, and workloads are known in advance was solved in the seminal paper by Yao, Demers, and Shenker~\cite{YDS} who gave an optimal offline algorithm called YDS. The arguably more interesting online setting in which the characteristics of a job $j$ only become known at its release time $r_j$ has been extensively studied~\cite{YDS, Bansal-AVR, BKP,BansalCKP12,AlbersAG15}, and the currently best known online algorithm is qOA, by Bansal et al.~\cite{BansalCKP12} achieving a competitive ratio of $4^\alpha/(2e^{1/2}\alpha^{1/4})$. 

However, the purely online setting may be too restrictive in many practical scenarios for which one can predict -- with reasonable accuracy -- the characteristics of future jobs, for example, by employing a learning approach on historical data.
\emph{Learning augmented algorithms} is a very novel research area (arguably first introduced in 2018 by Lykouris and Vassilvitskii~\cite{lykouris2018competitive}) trying to capture such scenarios in which predictions of uncertain quality are available for future parts of the input. The goal in learning augmented algorithms is to design algorithms that are at the same time (i) \emph{consistent}, i.e., obtain an improved competitive ratio in the presence of adequate predictions,
(ii) \emph{robust}, i.e., there is a worst case guarantee independently of the prediction accuracy (ideally within a constant factor of the competitive ratio of the best known online algorithm that does not employ any predictions) and (iii) \emph{smooth}, i.e., the performance guarantee degrades gracefully with the quality of the predictions.

\paragraph*{Previous Predictions Setups and Our Setup.} 
Online Speed-Scaling with machine learned predictions has been investigated before by Bamas et al.~\cite{BamasMRS20} who consider a prediction setup in a sense orthogonal to ours; the release times and deadlines of jobs are known in advance, and there is a prediction on the processing requirement. Although any input instance (with integer release times and deadlines) can be modeled in such a way (by considering all possible pairs of release times and deadlines and a processing requirement of zero for the pairs that do not correspond to a job), this can be computationally quite expensive. Bamas et al. present a consistent, robust, and smooth algorithm for the particular case in which the interval length of each job is the same.  And generalize their consistency and robustness results to the general case (in which each job can have an arbitrary interval length). For this more general setting, the proof of smoothness is omitted because ``\dots the prediction model and the measure of error quickly get complex and notation heavy''.

In the current paper, we consider the novel prediction setup in which predictions on the release times and deadlines are provided to the algorithm.  To keep the model simple, we assume that the actual processing requirement of each job $j\in\mathcal{J}$, as well as the number of jobs $n$ are known. It may be useful for the reader to think about our setup as having as many unit-size jobs as total processing volume in the instance, and a prediction on the release time and deadline of each such job. We note, however, that our actual setup requires significantly fewer predictions than this simplified one.

In this context, the main contribution of the current paper is to introduce a natural alternative prediction setup and error measure as well as an algorithm (\textsc{SwP}) within that setup, which possesses the desired properties of consistency, smoothness, and robustness in the general setting. It should be pointed out that since the two papers consider different prediction settings and in turn also error measures, the algorithms as well as their guarantees are incomparable. However one can consider the two prediction setups as complementary of each other.

\paragraph*{Our Contribution.} We show how the predictions can be used to develop an algorithm called \textsc{Scheduling with Predictions (SwP)}, that improves upon qOA when the predictions are reasonably accurate. More formally, in Section~\ref{sec:general} we show the following theorem: 
 
\begin{theorem}
    \label{thm:main_general_case}
Algorithm \textsc{SwP} achieves a  competitive ratio of
\begin{align*}
    \min\left\{\left(\frac{1}{1-\mu} \right)^{\alpha-1} \left(\frac{2\eta + 1}{1-2\lambda}\right)^{\alpha-1} , 2^{\alpha-1}\alpha^\alpha \left(\frac{1}{\mu}\right)^{\alpha-1}\right\}.
\end{align*}
\end{theorem}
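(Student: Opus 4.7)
My plan is to view Algorithm \textsc{SwP} as the superposition of two scheduling subroutines running in parallel: a prediction-based subroutine that, using a YDS-like computation over the \emph{predicted} release times and deadlines, is responsible for executing a $(1-\mu)$-fraction of every job's workload; and a robust, AVR-type subroutine that handles the remaining $\mu$-fraction using the \emph{actual} release times and deadlines. At every moment $t$, the processor runs at the aggregate speed $s(t)=s_P(t)+s_A(t)$. Feasibility follows immediately: since each subroutine delivers its assigned share of each job within $[r_j,d_j]$, the aggregate completes $w_j$ by the actual deadline.

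The stated competitive ratio is the minimum of two quantities, each obtained by an independent analysis of the same energy $E=\int s(t)^\alpha\,dt$. For the \emph{robustness} term $2^{\alpha-1}\alpha^\alpha (1/\mu)^{\alpha-1}$, I would discard the predictions entirely and bound $E$ pointwise in terms of AVR run on the full workload, up to a factor of $1/\mu$---this blow-up arises because the robust subroutine only carries a $\mu$-fraction of each job, so aligning it with AVR-on-full inflates speeds by $1/\mu$. The classical AVR analysis then yields energy at most $2^{\alpha-1}\alpha^\alpha \cdot (1/\mu)^\alpha \cdot \text{OPT}_\mu$, and combining with $\text{OPT}_\mu \le \mu^\alpha \cdot \text{OPT}$ gives the claimed $2^{\alpha-1}\alpha^\alpha (1/\mu)^{\alpha-1}$.

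For the \emph{consistency/smoothness} term, I would apply the weighted convexity inequality $(a+b)^\alpha \le \theta^{1-\alpha} a^\alpha + (1-\theta)^{1-\alpha} b^\alpha$ with $\theta=1-\mu$ to decouple the prediction and robust contributions, and then bound $\int s_P(t)^\alpha\,dt$ via a YDS-style density argument adapted to the predicted intervals. The factor $\frac{2\eta+1}{1-2\lambda}$ encodes precisely how much the prediction errors can inflate densities (controlled by $\eta$) or shrink effective processing windows (controlled by $\lambda$); a short manipulation converts this into $\int s_P(t)^\alpha\,dt \le \left(\frac{2\eta+1}{1-2\lambda}\right)^{\alpha-1}(1-\mu)^\alpha \cdot \text{OPT}$. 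Assembling the pieces then yields the first term of the $\min$.

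The principal obstacle is the density-transfer step: translating a YDS-type bound from the \emph{predicted} release times and deadlines, where \textsc{SwP} actually makes its decisions, to the \emph{actual} intervals, against which the schedule is judged. Each job's effective interval can stretch or shrink under the error model, and the analysis must track these perturbations with enough care that the resulting factor degrades \emph{continuously} with the prediction error rather than breaking at a threshold. This is where the error parameters $\eta$ and $\lambda$ enter essentially, and it is also the technically most delicate piece of the argument.
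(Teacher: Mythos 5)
Your high-level ingredients are the right ones (a prediction-following component charged via the shrinking lemmas, plus an AVR-like component charged via AVR's $2^{\alpha-1}\alpha^\alpha$ ratio), but there is a genuine gap: you fix the split of every job's volume a priori, $(1-\mu)w_j$ to the prediction component and $\mu w_j$ to the robust component. The algorithm \textsc{SwP} does not do this, and no fixed split can prove the theorem. The problem is robustness (and even feasibility): the prediction component may only run job $j$ during the portions of the precomputed schedule for the shrunk predicted instance that happen to fall inside the true window $[r_j,d_j)$; when predictions are bad, this feasibly usable reserved time $\ell(j)$ can be arbitrarily small or zero, so a component forced to carry $(1-\mu)w_j$ there is either infeasible or incurs energy unbounded relative to $\mathcal{E}_{YDS(J_{RD})}$. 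You cannot ``discard the predictions entirely'' in the robustness analysis, because that energy is genuinely consumed by the algorithm and must be accounted for. Your arithmetic also does not produce the claimed factor: $2^{\alpha-1}\alpha^\alpha(1/\mu)^{\alpha}\cdot\mu^{\alpha} = 2^{\alpha-1}\alpha^\alpha$, not $2^{\alpha-1}\alpha^\alpha(1/\mu)^{\alpha-1}$; in the paper the $(1/\mu)^{\alpha-1}$ arises from compressing an AVR-type schedule into right parts of length $\mu$ per unit slot (speed up by $1/\mu$, duration down by $\mu$). Similarly, your weighted-convexity assembly of the consistency bound yields a convex combination $(1-\mu)C_{\mathrm{pred}}+\mu C_{\mathrm{AVR}}$ rather than the product form $\left(\frac{1}{1-\mu}\right)^{\alpha-1}\left(\frac{2\eta+1}{1-2\lambda}\right)^{\alpha-1}$ stated in the theorem.

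The missing idea is the adaptive, per-arrival waterfilling that determines the split online. Upon arrival of job $j$, \textsc{SwP} assigns volume $y_j^t\le\delta_j$ to each right part and the remainder $X_j = w_j - \sum_t y_j^t$ to the feasible left parts, choosing the $y_j^t$ so that (i) whenever $y_j^t>0$ the right-part speed $V_t(j+1)/\mu$ is at most the left-part speed $X_j/\ell(j)$, and (ii) whenever $y_j^t<\delta_j$ the left-part speed is at most the right-part speed. Property (i) lets the consistency analysis charge every marginal right-part energy increase to the left parts, as if all of $w_j$ ran there, giving $\sum_j w_j^\alpha/\ell(j)^{\alpha-1} = \left(\frac{1}{1-\mu}\right)^{\alpha-1}\mathcal{E}_{YDS(J_{P'Q'})}$, after which the shrinking lemma yields the first term of the $\min$. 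Property (ii) lets the robustness analysis charge all left-part energy to the right parts, as if all volume were waterfilled there at density $\delta_j$, giving $(1/\mu)^{\alpha-1}\mathcal{E}_{AVR}$ and hence the second term, for every instance regardless of prediction quality (in the extreme case $\ell(j)=0$, all volume flows to the right parts and the algorithm degenerates gracefully to compressed AVR). It is exactly this two-sided charging, impossible under a fixed split, that allows one and the same schedule to satisfy both bounds simultaneously.
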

 
Here, $\eta$ is the \emph{error} of the prediction (defined formally later) that captures the distance between the predicted and the actual input instances, and $0\le\lambda<1/2,0\le\mu\le 1$ are two hyperparameters that can be thought of as the confidence in the prediction. 
Theorem~\ref{thm:main_general_case} implies that \textsc{SwP} is at the same time \emph{consistent}, \emph{smooth} and \emph{robust} where the exact consistency, smoothness and robustness depend on the choice of the hyperparameters $\lambda$ and $\mu$.
 
Additionally, in Section~\ref{sec:common-deadline} we obtain improved results for the restricted case in which all jobs have a common deadline $d$, and we are given predictions regarding the release times of the jobs. The corresponding algorithm is called \textsc{Common-Deadline-Scheduling with Predictions (CDSwP)} and obtains the following improved competitive ratio:
  
  \begin{theorem}
  \label{thm:common-main}
    Algorithm \textsc{CDSwP} achieves a competitive ratio of
    \begin{align*}
        \min\left\{\left( \frac{1+\eta}{1-\lambda}\right)^{\alpha -1},2^\alpha\left(\frac{1+\lambda}{1-\lambda}\right)^{\alpha -1}\right\}.
    \end{align*}
  \end{theorem}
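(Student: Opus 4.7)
The plan is to prove the two bounds inside the $\min$ separately, so that the algorithm's energy is always bounded by the smaller of the two while \textsc{CDSwP} itself remains oblivious to which regime applies. Throughout I will let $\hat r_j$ denote the predicted release time of job $j$, $r_j$ the actual release time, $d$ the common deadline, and let $\widehat{\mathrm{OPT}}$ and $\mathrm{OPT}$ denote the offline optimal energies for the predicted and actual instances. The first step is to pin down the role of $\lambda$: I expect \textsc{CDSwP} to effectively shrink the processing window by a $\lambda$-fraction, either by running the predicted YDS schedule against an artificially earlier deadline $t+(1-\lambda)(d-t)$ or by inflating every predicted speed by a factor $1/(1-\lambda)$. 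Any such time-compression by factor $(1-\lambda)$ multiplies the energy of a given schedule by exactly $(1/(1-\lambda))^{\alpha-1}$, which is precisely the common denominator in both terms of the claimed bound.

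For the first (consistency and smoothness) term, the key lemma would take the form $\widehat{\mathrm{OPT}} \le (1+\eta)^{\alpha-1}\,\mathrm{OPT}$. Intuitively, $\eta$ bounds the relative displacement between real and predicted release times, so on every critical YDS interval the densities of the two instances differ by at most a factor of $1+\eta$; raising speeds to the $\alpha$-th power then contributes $(1+\eta)^{\alpha-1}$ to the energy ratio. Combining this with the $(1/(1-\lambda))^{\alpha-1}$ loss from time-compression yields the first term $((1+\eta)/(1-\lambda))^{\alpha-1}$.

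For the second (robustness) term the argument should not refer to $\eta$ at all. Even if the algorithm was misled into postponing work, at any time $t$ its remaining work $W_t$ must still fit inside a buffered window of length at least $(1-\lambda)(d-t)$, so its instantaneous speed, and hence its power consumption, is bounded by a constant multiple of $W_t/((1-\lambda)(d-t))$. A standard averaging argument gives a lower bound of the form $W_t/((1+\lambda)(d-t))$ on the average speed $\mathrm{OPT}$ must use on a widened charging interval, producing the $((1+\lambda)/(1-\lambda))^{\alpha-1}$ factor; the extra $2^\alpha$ would then absorb the cost of the algorithm having to ``catch up'' after following an arbitrarily bad prediction, via a straightforward case split on whether the work completed before the true release time is at most or at least half of the final total.

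The hard part will be the smoothness lemma $\widehat{\mathrm{OPT}} \le (1+\eta)^{\alpha-1}\,\mathrm{OPT}$ for the specific definition of $\eta$ adopted by the paper. Release-time errors can restructure the YDS critical intervals in subtle ways: a predicted job may lie outside an actual critical interval, and tight versus slack feasibility constraints can swap roles when predictions move. Getting clean multiplicative control therefore requires $\eta$ to aggregate the per-job displacements in exactly the right way, most plausibly as a suitably normalized weighted sum involving $w_j\,|r_j-\hat r_j|$, so that the density on every critical interval of the actual instance is simultaneously controlled by a single $(1+\eta)$ factor relative to the predicted one. Once that lemma is in place, the two bounds combine by just taking the minimum, since the algorithm's own energy is analyzed identically in both cases.
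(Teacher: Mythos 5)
Your high-level skeleton (prove the two bounds separately, with the consistency term factored as a $(1/(1-\lambda))^{\alpha-1}$ compression loss times a $(1+\eta)^{\alpha-1}$ prediction loss) matches the paper, but both halves have genuine gaps. For the smoothness term, you correctly isolate $\mathcal{E}_{YDS(J_P)} \le (1+\eta)^{\alpha-1}\mathcal{E}_{YDS(J_R)}$ as the hard step, but you leave it unproven and then propose rescuing it by redefining $\eta$ as a normalized weighted sum of terms $w_j|r_j-\hat r_j|$ --- which contradicts the setup you are asked to work in: the paper's error is the max-norm error $\eta = \max_i |p_i-r_i|/(d-p_i)$, and the lemma does hold for it. The trick you are missing (Corollary~\ref{CD_PredictionError}) sidesteps all critical-interval restructuring: define envelope release times $p_i^* = p_i - \eta_i(d-p_i)$, so every true window $(r_i,d)$ is contained in $(p_i^*,d)$ and hence $\mathcal{E}_{YDS(J_{P^*})} \le \mathcal{E}_{YDS(J_R)}$ by optimality of YDS, while $J_P$ is a one-sided shrink of $J_{P^*}$ with factor $c = 1/(1+\eta)$, costing at most $(1+\eta)^{\alpha-1}$ by Lemma~\ref{lem:ShrinkfromOneSide}. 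Moreover, even granting that lemma, your consistency bound treats \textsc{CDSwP} as if it statically executed the compressed predicted schedule; the actual algorithm re-solves YDS at every release time on the aggregated unfinished workload, and one still needs the telescoping argument of Lemma~\ref{CompareToShrinkingPrediction} to show these online recomputations never push the energy above $\mathcal{E}_{YDS(J_{P'})}$.

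The robustness half is where your approach would actually fail. Your charging argument requires a lower bound of order $W_t/((1+\lambda)(d-t))$ on the optimum's (average) speed, where $W_t$ is the \emph{algorithm's} remaining work at time $t$; no such bound exists, because the jobs making up $W_t$ may have been released much earlier and the optimum may have processed them at arbitrarily low speed over a long prefix. Relating an online algorithm's backlog to the optimum's is precisely the delicate part that the potential-function analyses of OA/qOA exist to handle, and your sketch does not supply a substitute. The paper's argument is structural rather than charging-based: at the first index $k$ with $\eta_k>\lambda$, \textsc{CDSwP} switches to recovery mode and runs the frozen FtP speed profile (covering jobs $1,\dots,k-1$, all of which have per-job error at most $\lambda$) \emph{in parallel} with a freshly started qOA on jobs $k,\dots,n$, summing the two speed profiles. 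Lemma~\ref{lem:merge} shows that summing two feasible profiles costs a factor $2^{\alpha-1}$ over the sum of their energies; the pre-switch part is bounded by $((1+\lambda)/(1-\lambda))^{\alpha-1}\mathcal{E}_{YDS}$ by rerunning the consistency machinery with $\lambda$ in place of $\eta$ (plus an envelope argument for those jobs), the post-switch part is at most $\mathcal{E}_{qOA}$, and $\mathcal{E}_{YDS}\le\mathcal{E}_{qOA}$ collapses everything to $2^{\alpha}((1+\lambda)/(1-\lambda))^{\alpha-1}\mathcal{E}_{qOA}$. Note in particular that the $2^\alpha$ comes from merging plus bounding a two-term sum by twice its larger term --- not from your case split on how much work was completed before the true release time --- and that the robustness guarantee is stated relative to qOA's energy, not directly to the offline optimum.
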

  
  Although restricted, this case seems to capture the difficulty of the online setting for the problem, as supported by the fact that the strongest lower bound of $e^{\alpha -1}/\alpha$ on the competitive ratio for online algorithms for the problem is proven on such an instance~\cite{BansalCKP12}. 
  
  Finally, in Section~\ref{sec:experiments} we present an empirical evaluation of our results on a real-world data-set, which suggests the practicality of our algorithm. The actual results are preceded by Section~\ref{sec:prelim} which contains preliminary results and observations. All omitted proofs can be found in the supplementary material.

\subsection{Related Work.}

\subsubsection{Online Energy-Efficient Scheduling.}

As already mentioned, speed-scaling was first studied from an algorithmic point of view by~\cite{YDS}. They studied the deadline-based version of the problem also considered here, and in addition to providing the optimal offline algorithm called $YDS$, two online algorithms called \emph{Optimal Available (OA)} and \emph{Average Rate (AVR)}. OA recalculates an optimal offline schedule for the remaining instance at each release time, whereas AVR "spreads" the processing volume equally between its release time and deadline in order to determine the speed for each timepoint $t$. The actual schedule then is simply an \emph{Earliest Deadline First (EDF)} schedule with these speeds. They show that AVR obtains a competitive ratio of $2^{\alpha-1}\alpha^\alpha$ which is essentially tight as shown by~\cite{Bansal-AVR}. Algorithm OA, on the other hand, was analyzed by~\cite{BansalCKP12} who proved a tight competitive ratio of $\alpha^\alpha$.

The currently best known algorithm for the problem, at least for modern processors which satisfy $\alpha=3$ is the aforementioned $qOA$ algorithm, which for any parameter $q\ge 1$ sets the speed of the processor to be $q$ times the speed that the optimal offline algorithm would run the jobs in the current state. Algorithm qOA attains a competitive ratio of $4^\alpha/(2e^{1/2}\alpha^{1/4})$,  for $q= 2-1/\alpha \approx 1.667$.

The multiprocessor version of online, deadline-based, speed-scaling has also been studied, see~\cite{AlbersAG15, AngelBKL19} as well as other objectives, for example, flow time~\cite{AlbersF07,BansalPS09}. We refer the interested reader to surveys~\cite{Albers10,IraniP05}.

\subsubsection{Further Results on Learning Augmented  Algorithms.}
\cite{lykouris2018competitive} was arguably the seminal paper in the area, considered the online caching problem. Subsequently, \cite{NEURIPS2018_73a427ba} considered the ski-rental problem as well as non-clairvoyant scheduling. Similar to the current work, the robustness and consistency guarantees were given as a function of a hyperparameter that is part of the input to the algorithm. Both the caching and the ski-rental problem have since been extensively studied in the literature (see for example~\cite{rohatgi2020near, antoniadis2020online, Wei20} and~\cite{WangL20, pmlr-v97-gollapudi19a}).

Several other online problems have been investigated through the lens of learning-augmented algorithms and results of similar flavor were obtained. Examples include scheduling and queuing problems~\cite{48659, Mitzenmacher20}, online selection and matching problems~\cite{AntoniadisGKK20, DuttingLLV21}, or the more general framework of online primal-dual algorithms~\cite{BamasMS20}. We direct the interested reader to a recent survey~\cite{MitzenmacherV20}.

\section{Preliminaries}
\label{sec:prelim}
We consider \emph{online, deadline-based speed-scaling} as described in the introduction. 
Given a scheduling algorithm $A$ on the set jobs $\mathcal{J}$, the energy consumption of $A$ on $\mathcal{J}$ is denoted by $\mathcal{E}_A (\mathcal{J})$.
When clear from the context, we may write $\mathcal{E}_A$ instead of $\mathcal{E}_A (J)$ to simplify the notation.

As usual for online problems, the performance guarantees are given by employing \emph{competitive analysis}. Following the speed-scaling literature (see for example~\cite{BansalCKP12}) we use the $\emph{strict competitive ratio}$. Formally, the (strict) competitive ratio of algorithm $A$ for the online, deadline-based speed-scaling problem, on input instance $I$ is given by
\begin{align*}
    \max_I \frac{\mathcal{E}_{A(I)}}{\mathcal{E}_{YDS(I)}},
\end{align*}

were $\mathcal{E}_{A(I)}$ is the cost that algorithm $A$ incurs on instance $I$, and the maximum is taken over all possible input instances $I$. 
The competitive ratio in many cases will depend on the prediction error.

\paragraph{Prediction Setup.} 
The algorithm initially gets information about the number of jobs $n$, the corresponding processing volumes $w_j, \forall j\in \mathcal{J}$, as well as for every job $j\in\mathcal{J}$ a prediction $p_j$ for the release time $r_j$ and another prediction $q_j$ for the deadline $d_j$. Again, the actual values of $r_j$ and $d_j$ only become known at timepoint $r_j$. Let $R=\{r_1,\ldots r_n\}$, $D=\{d_1,\ldots d_n\}$, $P=\{p_1,\ldots p_n\}$ and $Q=\{q_1,\ldots q_n\}$. Note that in the special case where all jobs have a common deadline $d$ we naturally only obtain predictions for the release times.

The quality of the prediction is measured in terms of a \emph{prediction error} $\eta$, which intuitively $\eta$ measures the distance between the predicted values and the actual ones. We start by defining the individual prediction error $\eta_i$ for each job $i\in\mathcal{J}$.

\begin{definition}
Let the prediction error for job $i$ be $\eta_i = \max \left\{ \frac{|p_{i} - r_i|}{q_{i} - p_i}, \frac{|q_{i} - d_i|}{q_{i} - p_i}\right\}$.
\end{definition}

Note that we implicitly assume that $p_i\leq q_i$ for all $i\in\mathcal{J}$ since otherwise, it is immediately obvious that the quality of the predictions is low, and one could just run a classical online algorithm for the problem. Furthermore, if the instance has a common deadline then $\eta_i$ simplifies to $\eta_i =  \frac{|p_{i} - r_i|}{d - p_i}$.

The (total) prediction error $\eta$ of an input instance is then given by $\eta = \max_i\eta_i$.  We call this \emph{max-norm-error}.

\begin{definition}
    We say that the total error $\eta$ is a \emph{max-norm-error} if $\eta$ is given by the infinity norm of the vector of the respective errors for each job. More formally, 
    \begin{align*}
        \eta = \lVert\boldsymbol{\eta} \rVert_\infty = \max (\eta_1,\eta_2,\dots \eta_n).
    \end{align*}
\end{definition}

\paragraph{Performance Guarantees.} 
In the following we formalize the performance guarantees used to evaluate our algorithms.
\begin{definition}
    We say that an algorithm within the above prediction setup is:
    \begin{itemize}
        \item \emph{Consistent,} if its competitive ratio is strictly better than that of the best online algorithm without predictions for the problem, whenever $\eta = 0$.
        \item \emph{Robust,} if its competitive ratio is within a constant factor from that of the best online algorithm without predictions for the problem. Note that Robustness is independent of the prediction quality.
        \item \emph{Smooth,} if its competitive ratio is a smooth function of $\eta$.
    \end{itemize}
\end{definition}

\paragraph{Shrinking of Intervals.}
The most straightforward way to consider the predictions would arguably be to blindly trust the predictions, i.e., schedule jobs assuming that the predicted instance is the actual instance. 

Consider the instance $J_{PQ}$ (resp. $J_{RD}$) in which every job has the corresponding predicted (resp. actual) release time and deadline. The naive algorithm would compute the optimal offline schedule $YDS(J_{PQ})$ and try to schedule tasks according to it. If the predictions are perfectly accurate, then this clearly is an optimal schedule, and the best one can do. However, if the predictions are even slightly inaccurate, then the resulting schedule may be infeasible. 
Moreover, our goal is to have a robust algorithm, which cannot be obtained by following the predictions blindly.
For these reasons, one has to trust the predictions more cautiously and not blindly.

One of our crucial ideas is to slightly shrink the interval between each job's release time and deadline before scheduling it. The intuition is that if the predictions are only slightly off, then a YDS schedule for the newly obtained instance will be feasible at a slight increase in energy consumption over the YDS schedule of the predicted instance. The following lemmas formalize this intuition. We note that a similar result is also presented in~\cite{BamasMRS20}; however, given that the actual setups are different new proofs are required (the proofs of these lemmas can be found in the supplementary material).

\begin{restatable}{lemma}{ShrinkfromOneSide}
\label{lem:ShrinkfromOneSide}
    Consider a common deadline instance $\mathcal{J}$, and another common deadline instance $\mathcal{\hat{J}}$ constructed from $\mathcal{J}$ such that every job $\hat{j}_i \in\mathcal{\hat{J}}$ has workload $\hat{w}_i=w_i$, $\hat{d}=d$, and $\hat{r}_i= r_i + (1-c_i) \cdot (d - r_i)$ for some shrinking parameter $0\le c_i<1$. Set $c=\max_i c_i$. Then, 
    \begin{align*}
        \mathcal{E}_{YDS}(\mathcal{\hat{J}})  \le (1/c)^{\alpha-1}\mathcal{E}_{YDS}({\mathcal{J}}).
    \end{align*}
\end{restatable}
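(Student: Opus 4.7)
The plan is to handle this in two moves: first, reduce the variable-shrinkage case to the case of uniform shrinkage by a monotonicity argument on the feasibility intervals, and second, handle the uniform case by an explicit time-scaling of the YDS schedule of $\mathcal{J}$.

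For the uniform case—where every $c_i$ equals the same $c$, so intervals become $[d - c(d-r_i), d]$—I would fix the YDS schedule of $\mathcal{J}$ with speed function $s^*(t)$ and job-assignment function $j^*(t)$, and define a candidate schedule for the uniformly shrunk instance by the time-compressed speed function
\begin{align*}
    s'(t) \;=\; \frac{1}{c}\,s^*\!\left(d - \frac{d-t}{c}\right),
\end{align*}
running, at time $t$, whichever job YDS runs at time $d - (d-t)/c$. The change of variables $u = d - (d-t)/c$ (so $dt = c\,du$) does a lot of work at once: the endpoints $t = d - c(d-r_i)$ and $t = d$ map to $u = r_i$ and $u = d$, so release-time/deadline constraints are respected job-by-job; the work processed on job $i$ equals $\int s^*\,du = w_i$; and the energy satisfies
\begin{align*}
    \int s'(t)^\alpha\,dt \;=\; \frac{1}{c^{\alpha-1}} \int s^*(u)^\alpha\,du \;=\; \frac{1}{c^{\alpha-1}}\,\mathcal{E}_{YDS}(\mathcal{J}).
\end{align*}
Since YDS is optimal, the YDS energy of the uniformly shrunk instance is at most this candidate's energy.

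For the non-uniform case I would invoke monotonicity: if every interval of $\mathcal{\hat{J}}$ contains the corresponding interval of an auxiliary uniformly shrunk instance, then any schedule feasible for the auxiliary instance is also feasible for $\mathcal{\hat{J}}$, and hence $\mathcal{E}_{YDS}(\mathcal{\hat{J}}) \le \mathcal{E}_{YDS}(\text{uniform instance})$. Choosing the uniform factor appropriately (so that the auxiliary instance is at least as restrictive as $\mathcal{\hat{J}}$) and chaining this with the uniform-case bound gives the claimed inequality.

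The main obstacle is executing the substitution cleanly—in particular, being careful about what "job-assignment" means for the compressed schedule and confirming that each job's release time and deadline are respected at compressed times (feasibility, work-preservation, and the energy blow-up should all fall out of the same single calculation). A secondary subtlety is the monotonicity step: one has to verify that the specific uniform factor $c$ appearing in the statement is such that the auxiliary uniform intervals are genuinely contained in the corresponding intervals of $\mathcal{\hat{J}}$, which is what makes the chaining go through.
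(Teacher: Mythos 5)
Your approach is essentially the paper's own: the paper also compresses time toward the common deadline via the affine map $u \mapsto d - c(d-u)$ (each YDS start time $a_i$ becomes $\hat{a}_i = d - c(d-a_i)$ and each job's speed $s_i$ becomes $s_i/c$), and verifies feasibility of the compressed schedule by induction over jobs in release order; your change-of-variables calculation is the same argument packaged more directly, and the energy computation is identical.

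However, the ``secondary subtlety'' you defer is not a formality --- it is exactly where the statement, as printed, breaks. The containment you need, $d - c(d-r_i) \ge \hat{r}_i = d - c_i(d-r_i)$, holds if and only if $c \le c_i$; so your chaining (and likewise the paper's proof, which asserts $\hat{a}_i \ge d(1-c) + c r_i = \hat{r}_i$, an equality valid only when $c_i = c$) requires $c = \min_i c_i$, not $c = \max_i c_i$ as the lemma states. With $\max$ the claim is in fact false: take two jobs with $r_1 = r_2 = 0$, $d = 1$, $w_1 = w_2 = 1$, $c_1 = 1-\epsilon$ and $c_2 = \epsilon$; then $c = \max_i c_i = 1-\epsilon$ and the claimed bound is roughly $\mathcal{E}_{YDS}(\mathcal{J}) = 2^\alpha$, whereas job $2$ must be processed in an interval of length $\epsilon$, forcing $\mathcal{E}_{YDS}(\mathcal{\hat{J}}) \ge (1/\epsilon)^{\alpha-1}$. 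So the statement should be read with $c = \min_i c_i$ (this is consistent with every later application of the lemma, where $c$ is always a common lower bound on the retained fractions $c_i$); under that reading your two-step plan goes through verbatim, and in fact the compressed schedule from your uniform step is then directly feasible for $\mathcal{\hat{J}}$, so the auxiliary uniform instance and the monotonicity step can be skipped entirely.
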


\begin{restatable}{lemma}{ShrinkingRatio}
\label{lem:ShrinkingRatio}
Consider a (general) instance $\mathcal{J}$, and another  instance $\mathcal{\hat{J}}$ in which every job $j_i\in \mathcal{J}$ corresponds to a job $\hat{j}_i\in \mathcal{\hat{J}}$ with workload $\hat{w}_i=w_i$,  $\hat{r}_i= r_i + \frac{1-c}{2}\cdot (d_i-r_i)$ and $\hat{d}_i = d_i -\frac{1-c}{2}\cdot (d_i-r_i)$ for some shrinking parameter $0\le c<1$. Then, 
    \begin{align*}
        \mathcal{E}_{YDS}(\mathcal{\hat{J}}) \le (1/c)^{\alpha-1}\mathcal{E}_{YDS}(\mathcal{J}).
    \end{align*}
\end{restatable}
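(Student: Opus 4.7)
The plan is to exhibit a feasible schedule $\hat S$ for $\hat{\mathcal{J}}$ of energy at most $(1/c)^{\alpha-1}\,\mathcal{E}_{YDS}(\mathcal{J})$; the lemma then follows from the offline optimality of $YDS$ on $\hat{\mathcal{J}}$. I would start from $S := YDS(\mathcal{J})$ with speed function $s(\cdot)$ and the decomposition $s = \sum_i s_i$, where $s_i$ is the per-job rate supported on $[r_i,d_i]$ with $\int s_i\,dt = w_i$. The midpoint-preserving affine contraction $\phi_i(t) := m_i + c(t - m_i)$, with $m_i = (r_i+d_i)/2$, maps $[r_i,d_i]$ onto $[\hat r_i,\hat d_i]$, and the natural candidate is the time-compressed fractional schedule $\hat s_i(\hat t) := (1/c)\,s_i(\phi_i^{-1}(\hat t))$ on $[\hat r_i,\hat d_i]$. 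A change of variables gives $\int \hat s_i\,d\hat t = w_i$, so each job completes within its shrunk interval, and the aggregate fractional speed $\hat s = \sum_i \hat s_i$ admits a serial realisation by EDF.

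The main obstacle is bounding $\int \hat s^\alpha\,d\hat t$. Summing per-job changes of variables gives $\sum_i \int \hat s_i^\alpha\,d\hat t = (1/c)^{\alpha-1}\mathcal{E}_S$; however, since the contraction centres $m_i$ differ, distinct jobs' shifted processings can overlap in time, and by convexity of $x\mapsto x^\alpha$ the quantity $\int(\sum_i \hat s_i)^\alpha\,d\hat t$ can strictly exceed the per-job sum. So the naive fluid construction alone does not attain the target bound.

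To close this gap I would use the following density-comparison observation, together with the laminar critical-interval structure of $YDS$: for any interval $[\hat a,\hat b]$, every job $\hat j_i$ with $[\hat r_i,\hat d_i]\subseteq[\hat a,\hat b]$ has its original $[r_i,d_i]$ contained in $[a,b]$ --- the interval of length $(\hat b-\hat a)/c$ with the same midpoint as $[\hat a,\hat b]$ --- so the combined workload of those jobs is at most $\rho_{\max}(\hat b-\hat a)/c$, where $\rho_{\max}$ is the maximum density in $\mathcal{J}$. Within each critical interval $[r^\star,d^\star]$ of $S$ (of density $\rho^\star$), whose contained jobs all satisfy $[\hat r_i,\hat d_i]\subseteq[r^\star,d^\star]$, this density bound allows me to pack the corresponding shrunken jobs inside $[r^\star,d^\star]$ via EDF at instantaneous speed at most $\rho^\star/c$, yielding a local energy contribution of at most $(1/c)^{\alpha-1}\rho^{\star\alpha}(d^\star-r^\star)$. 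Summing across the disjoint (laminar) critical intervals of $S$ then gives the global bound $(1/c)^{\alpha-1}\mathcal{E}_S$.

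The hard part will be making the local packing argument rigorous: the shrunken intervals $[\hat r_i,\hat d_i]$ of jobs in a common critical interval have varying midpoints and need not lie inside $[\hat r^\star,\hat d^\star]$, so one must apply the density comparison (with $\rho_{\max}$ replaced by $\rho^\star$) at every nested sub-interval of $[r^\star,d^\star]$ to verify the EDF feasibility condition. This in turn relies on the key property that every job within the critical interval has individual density $w_i/(d_i-r_i)\le \rho^\star$, which is what prevents shrunken intervals from piling up beyond the cap $\rho^\star/c$.
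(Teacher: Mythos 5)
Your first two paragraphs are on target: the naive per-job time contraction fails for exactly the convexity reason you give, and your density-comparison observation (a shrunk window inside $[\hat a,\hat b]$ forces the original window inside the $1/c$-dilation of $[\hat a,\hat b]$ about its midpoint) is correct. The genuine gap is in the reduction to critical intervals. Your packing plan implicitly assumes that every job assigned by YDS to a critical region has its \emph{shrunk} window inside that region, and that the critical regions are genuine disjoint intervals of the original timeline. Both statements hold only for the first (maximum-density) critical interval. YDS is recursive: after the first critical interval is removed, the timeline is contracted, so a later ``critical interval'' is in general a \emph{union} of intervals in original time, and a job assigned to it can have its shrunk window disjoint from that union --- even entirely inside an earlier critical interval. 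Concretely, take job $A$ with window $[4,6]$ and workload $20$, job $B$ with window $[0,10]$ and workload $2$, and $c=1/5$. The first critical interval is $[4,6]$ (density $10$), and YDS schedules $B$ on $[0,4]\cup[6,10]$ at speed $1/4$; but $B$'s shrunk window is exactly $[4,6]$, so it cannot be packed inside its own critical region at any speed, let alone $\rho^\star/c$. Any repair forces later groups to run on top of earlier ones, which reintroduces the superposition/convexity problem you were trying to avoid; and recursing instead (shrinking the contracted instance) is not an option either, because shrinking does not commute with YDS's timeline contraction --- in the example, contracting out $[4,6]$ and then shrinking gives $B$ the window $[3.2,4.8]$ in contracted time, whereas shrinking first gives $[4,6]$, which contracts to a single point.

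The paper's proof avoids critical intervals entirely. It takes the YDS schedule of $\mathcal{J}$, in which each job $j_i$ runs at a constant speed $s_i$, and considers the \emph{serial} EDF schedule for $\hat{\mathcal{J}}$ in which job $\hat j_i$ runs at constant speed $\hat s_i = s_i/c$ whenever it runs. Because the schedule is serial (one job at a time), its energy is additive over jobs, namely $\sum_i (s_i/c)^{\alpha}\cdot(c\,w_i/s_i) = (1/c)^{\alpha-1}\mathcal{E}_{YDS}(\mathcal{J})$, with no convexity loss; the only thing left to prove is feasibility, which the paper establishes by a first-violation exchange argument showing that for each pair of jobs $i<k$ (ordered by deadline), the time $i$ occupies inside $(\hat r_k,\hat d_k)$ is at most $c$ times the time it occupies inside $(r_k,d_k)$, so job $k$ still fits in its shrunk window. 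In short: scale speeds per job and serialize via EDF, rather than superpose fluid schedules or decompose by critical intervals --- that is the idea your proposal is missing.
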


It will be useful to bound the energy consumption of (the possibly infeasible for the original input instance) schedule $YDS(J_{PQ})$.
We compute the energy consumption of schedule $YDS(J_{PQ})$ in the following lemma.

\begin{restatable}{lemma}{PredError}
%\begin{lemma}
\label{lem:PredictionError}
    For any $\eta\ge 0$ there holds 
        \begin{align*}
        \mathcal{E}_{YDS(J_{PQ})} \le (2\eta + 1)^{\alpha -1} \mathcal{E}_{YDS(J_{RD})}.% \le (2\lambda + 1)^{\alpha -1} \mathcal{E}_{YDS}(J_{RD}).
    \end{align*}
%\end{lemma}
\end{restatable}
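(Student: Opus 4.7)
The plan is to construct an auxiliary instance $\bar{J}$ whose intervals are large enough to contain both the predicted and the actual intervals of every job, and then recover $J_{PQ}$ from $\bar{J}$ via the symmetric shrinking of Lemma~\ref{lem:ShrinkingRatio}. This lets us relate $\mathcal{E}_{YDS(J_{PQ})}$ to $\mathcal{E}_{YDS(\bar{J})}$ through the shrinking factor, and then relate $\mathcal{E}_{YDS(\bar{J})}$ to $\mathcal{E}_{YDS(J_{RD})}$ via the monotonicity of YDS in interval size (larger intervals cannot increase optimal energy).

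Concretely, for each job $i$ let $L_i = q_i - p_i$. Since $\eta_i\leq\eta$, the definition of the error yields $|p_i - r_i|\leq \eta L_i$ and $|q_i - d_i|\leq \eta L_i$. Define $\bar{J}$ by keeping the workloads $\bar{w}_i = w_i$ and setting $\bar{r}_i = p_i - \eta L_i$, $\bar{d}_i = q_i + \eta L_i$. Then $[\bar r_i,\bar d_i]$ contains both $[p_i,q_i]$ and $[r_i,d_i]$, so any feasible schedule for $J_{RD}$ is also feasible for $\bar{J}$; hence
\begin{equation*}
\mathcal{E}_{YDS(\bar{J})} \;\leq\; \mathcal{E}_{YDS(J_{RD})}.
\end{equation*}

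Next, I observe that $J_{PQ}$ is precisely a symmetric shrinking of $\bar{J}$ in the sense of Lemma~\ref{lem:ShrinkingRatio}. Indeed, both $[\bar r_i,\bar d_i]$ and $[p_i,q_i]$ share the midpoint $(p_i+q_i)/2$, and $[\bar r_i,\bar d_i]$ has length $(1+2\eta)L_i$ while $[p_i,q_i]$ has length $L_i$. Taking shrinking parameter $c = 1/(2\eta+1)$, the amount cut from each side is $\frac{1-c}{2}(1+2\eta)L_i = \eta L_i$, which matches the construction. Applying Lemma~\ref{lem:ShrinkingRatio} to $\bar{J}$ with this $c$ gives
\begin{equation*}
\mathcal{E}_{YDS(J_{PQ})} \;\leq\; (1/c)^{\alpha-1}\,\mathcal{E}_{YDS(\bar{J})} \;=\; (2\eta+1)^{\alpha-1}\,\mathcal{E}_{YDS(\bar{J})},
\end{equation*}
and chaining the two inequalities yields the claim.

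The main conceptual step is identifying the correct ``bridge'' instance $\bar{J}$: a direct symmetric shrinking of $J_{RD}$ into something contained in $[p_i,q_i]$ would only produce a bound with factor $(1-2\eta)^{-(\alpha-1)}$, which is strictly weaker than $(2\eta+1)^{\alpha-1}$. Going through a common super-interval is what gives the tighter constant. Once $\bar{J}$ is in place, both of the remaining steps (monotonicity of YDS in interval size, and verification that the shrinking is exactly symmetric around the common midpoint) are immediate, so I do not anticipate any genuine obstacle beyond choosing the right construction.
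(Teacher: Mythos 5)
Your proof is correct and follows essentially the same route as the paper: the paper also constructs an enlarged bridge instance $J_{P^*Q^*}$ (your $\bar{J}$) whose intervals contain both the actual and the predicted ones, bounds its YDS energy by $\mathcal{E}_{YDS(J_{RD})}$ via feasibility/monotonicity, and then recovers $J_{PQ}$ by applying Lemma~\ref{lem:ShrinkingRatio} with $c = 1/(2\eta+1)$. The only difference is cosmetic but in your favor: you expand every interval by the uniform amount $\eta(q_i-p_i)$, which matches the single-parameter shrinking of Lemma~\ref{lem:ShrinkingRatio} exactly, whereas the paper expands by the per-job amounts $\eta_i(q_i-p_i)$ and then applies the lemma with the global $c = 1/(2\eta+1)$, a step that strictly speaking only reproduces $J_{PQ}$ when all $\eta_i$ coincide with $\eta$.
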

\begin{proof}
    Consider two sets $P^* = \{p^*_1,\dots p^*_n\}$ and $Q^* = \{q^*_1,\dots q^*_n\}$, with 
    $p^*_i = p_i - \eta_i (q_i - p_i)$ and $q^*_i = q_i + \eta_i (q_i - p_i)$.
    
    By the definition of $\eta_i$, $p_i^*$ and $q_i^*$, we have $(r_i,d_i) \subseteq (p^*_i,q^*_i)$, and therefore

    \begin{align*}
        \mathcal{E}_{YDS(J_{P^*Q^*})} \le \mathcal{E}_{YDS(J_{RD})}.
    \end{align*}

    By having $c = \frac{1}{(2\eta + 1)}$, and $J= J_{P^*Q^*}$ $(r_i = p^*_i, d_i = q^*_i)$ in Lemma~\ref{lem:ShrinkingRatio}, we obtain $J' = J_{PQ}$ and therefore,
    
    \begin{align*}
      \mathcal{E}_{YDS(J_{PQ})} \le (2\eta + 1)^{\alpha -1} \mathcal{E}_{YDS(J_{P^*Q^*})}.
    \end{align*}
\end{proof}

Using Lemma~\ref{lem:ShrinkfromOneSide}, we can obtain a similar result for common deadline instances.

\begin{corollary}\label{CD_PredictionError}
In common deadline instances for any parameter $\eta\ge 0$, there holds 
\begin{align*}
      \mathcal{E}_{YDS(J_{P})}\leq  (\eta +1)^{\alpha-1} \cdot  \mathcal{E}_{YDS(J_{R})}.
\end{align*}
\end{corollary}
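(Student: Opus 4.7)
The plan is to mirror the proof of Lemma~\ref{lem:PredictionError} but, since the deadline is shared and known exactly, use the one-sided shrinking result (Lemma~\ref{lem:ShrinkfromOneSide}) instead of the two-sided one. The rough picture is to introduce an intermediate instance whose release times are pushed to the left of (or equal to) the actual release times, bound its YDS-energy by that of $J_R$, and then shrink back to the predicted instance $J_P$ losing a factor of $(1+\eta)^{\alpha-1}$.

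Concretely, I would first define, for each job $i$, the shifted release time $p_i^\ast = p_i - \eta_i (d - p_i)$, and let $P^\ast = \{p_1^\ast,\ldots,p_n^\ast\}$. From the common-deadline definition $\eta_i = |p_i - r_i|/(d - p_i)$ one immediately gets $r_i \ge p_i^\ast$ (if $r_i \ge p_i$ this is trivial; if $r_i < p_i$, then $p_i - r_i \le \eta_i (d-p_i)$, so $r_i \ge p_i^\ast$). Because every job in $J_{P^\ast}$ has a weakly larger feasibility window than the corresponding job in $J_R$, YDS on $J_{P^\ast}$ has no more energy than YDS on $J_R$:
\begin{align*}
    \mathcal{E}_{YDS(J_{P^\ast})} \le \mathcal{E}_{YDS(J_R)}.
\end{align*}

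Next I would apply Lemma~\ref{lem:ShrinkfromOneSide} with $\mathcal{J}=J_{P^\ast}$ (so the lemma's $r_i$ is $p_i^\ast$ and the lemma's $\hat r_i$ is $p_i$), picking the per-job shrinking parameter $c_i = 1/(1+\eta_i)$. A direct computation shows that $p_i^\ast + (1-c_i)(d - p_i^\ast) = p_i$, since $d - p_i^\ast = (1+\eta_i)(d-p_i)$ and $(1-c_i)(1+\eta_i) = \eta_i$. Thus the lemma's $\hat{\mathcal{J}}$ is exactly $J_P$, and it yields
\begin{align*}
    \mathcal{E}_{YDS(J_P)} \le (1/c)^{\alpha-1} \mathcal{E}_{YDS(J_{P^\ast})},
\end{align*}
where $c = \max_i c_i$. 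Since $\eta_i \le \eta$, each $c_i = 1/(1+\eta_i) \ge 1/(1+\eta)$, so $c \ge 1/(1+\eta)$ and hence $1/c \le 1+\eta$. Chaining the two inequalities gives the claimed bound $\mathcal{E}_{YDS(J_P)} \le (1+\eta)^{\alpha-1} \mathcal{E}_{YDS(J_R)}$.

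I do not expect any real obstacle; the only thing to verify carefully is that the chosen $c_i$ really produces $p_i$ as the shrunk release time (an easy algebraic check) and that $r_i \ge p_i^\ast$ in both sign cases for $p_i - r_i$. Everything else is a direct invocation of Lemma~\ref{lem:ShrinkfromOneSide} together with the monotonicity observation that enlarging feasibility windows cannot increase the YDS optimum.
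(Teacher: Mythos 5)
Your proposal is correct and takes essentially the same approach the paper intends: the paper gives no explicit proof of this corollary, stating only that it follows from Lemma~\ref{lem:ShrinkfromOneSide} analogously to how Lemma~\ref{lem:PredictionError} follows from Lemma~\ref{lem:ShrinkingRatio}, and your argument---left-shifting release times to $p_i^\ast = p_i - \eta_i(d-p_i)$ so that $\mathcal{E}_{YDS(J_{P^\ast})} \le \mathcal{E}_{YDS(J_R)}$, then shrinking back to $J_P$ via Lemma~\ref{lem:ShrinkfromOneSide}---is exactly that analogue. Your use of per-job parameters $c_i = 1/(1+\eta_i)$ is legitimate (Lemma~\ref{lem:ShrinkfromOneSide} allows them), and since every $c_i \ge 1/(1+\eta)$ the final bound $1/c \le 1+\eta$ holds regardless of whether $c$ is read as the maximum or the minimum of the $c_i$, so the argument goes through.
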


The idea of shrinking intervals as described above will be useful for the general case as well as the restricted common deadline case.

How much each algorithm will shrink the predicted job intervals will depend on the \emph{confidence}. This will be denoted by a \emph{confidence parameter} $0< \lambda \le 1/2$ that will be given as input to the respective algorithm. In the following, we define the \emph{shrunk prediction set} of release times and deadlines parametrized by this $\lambda$, and use the above lemmas to argue about how this "shrinking" actually affects the energy consumption of the corresponding $YDS$-schedule.

\begin{definition}\label{shrunk Predictions}
Let $P'=\{p'_1, \ldots, p'_n\}$ and $Q'=\{q'_1, \ldots, q'_n\}$ be the shrunk prediction set of release times and deadlines respectively in which $p_i' = \lfloor p_i + \lambda (q_i-p_i)\rfloor$ and $q_i' = \lceil q_i -\lambda (q_i-p_i) \rceil$ for all $i \in [n]$.
\end{definition}

We first observe that any schedule that considers the sets $P'$ and $Q'$ as the actual release times and deadlines of the jobs will be feasible, as long as the error $\eta$ is not larger than $\lambda$.

\begin{restatable}{observation}{pjrjqjdj}
%\begin{observation}
    \label{obs:pjrjqjdj}
    Under the assumption that $\eta\in(0,\lambda)$, it follows that $r_i \le p_i'$ and $q_i' \le d_i$ hold for every job $i$.
%\end{observation}
\end{restatable}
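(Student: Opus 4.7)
The plan is to unpack the definition of $\eta_i$, derive two absolute-value bounds, and then perform a short case analysis to handle the floor/ceiling operations in the definition of $p_i'$ and $q_i'$. The assumption $\eta\in(0,\lambda)$ coupled with $\eta_i \le \eta$ for every $i$ gives, via the definition of $\eta_i$, the strict inequalities
\begin{align*}
|p_i - r_i| < \lambda(q_i - p_i) \qquad\text{and}\qquad |q_i - d_i| < \lambda(q_i - p_i).
\end{align*}
These will be the only arithmetic facts I use about the prediction quality.

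To establish $r_i \le p_i'$, I would split on the sign of $p_i - r_i$. If $r_i \le p_i$, then trivially $r_i \le p_i \le p_i + \lambda(q_i - p_i)$ (since $\lambda \ge 0$ and $q_i \ge p_i$). If $r_i > p_i$, then the first bound rewrites as $r_i - p_i < \lambda(q_i - p_i)$, hence $r_i < p_i + \lambda(q_i - p_i)$. In either case $r_i \le p_i + \lambda(q_i - p_i)$. Since release times are taken to be integers in the model (consistent with the floor/ceiling appearing in Definition~\ref{shrunk Predictions}), any integer that is at most a real number $x$ is also at most $\lfloor x\rfloor$, and therefore $r_i \le \lfloor p_i + \lambda(q_i - p_i)\rfloor = p_i'$.

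The argument for $q_i' \le d_i$ is symmetric. If $d_i \ge q_i$, then $q_i - \lambda(q_i - p_i) \le q_i \le d_i$. If $d_i < q_i$, the second bound gives $q_i - d_i < \lambda(q_i - p_i)$, so $q_i - \lambda(q_i - p_i) < d_i$. In both cases $q_i - \lambda(q_i - p_i) \le d_i$, and since $d_i$ is an integer, $q_i' = \lceil q_i - \lambda(q_i - p_i)\rceil \le d_i$ as required.

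The only potential subtlety is the interplay between the strict inequality $\eta < \lambda$ and the rounding; this is exactly why the observation is stated with the open interval $(0,\lambda)$. The strict inequality ensures that even in the boundary case where $r_i > p_i$ (or $d_i < q_i$), rounding to the nearest integer in the correct direction cannot cross $r_i$ (resp.\ $d_i$). No deeper idea is needed; the whole statement is a direct consequence of the triangle-inequality-style bounds on $\eta_i$ together with monotonicity of floor and ceiling on integers.
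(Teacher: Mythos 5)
Your proposal is correct and follows essentially the same route as the paper's own proof: bound $|p_i-r_i|$ and $|q_i-d_i|$ by $\lambda(q_i-p_i)$ via the definition of $\eta_i$, do the case split on the sign of $p_i-r_i$ (resp.\ $q_i-d_i$), and use integrality of $r_i$ (resp.\ $d_i$) to pass through the floor (resp.\ ceiling). If anything, your write-up is slightly more complete, since the paper handles the deadline side only by saying ``the same holds,'' whereas you spell it out.
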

\begin{proof}
    If $p_j \ge r_j$ the observation directly follows because $p_j'\ge p_j$.
    
    If $p_j< r_j$, let $p_i'' =  p_i + \lambda (q_i-p_i)$.  By the assumption and the definitions of $\eta$, and $\eta_i$, we have
    \begin{align*}
        |p_i-r_i| \le \lambda (q_i -p_i) = p_i''-p_i.
    \end{align*}
    
     Then the above equation gives
        \begin{align*}
            -p_i + r_i \le p_i'' -p_i, 
        \end{align*}
        and therefore $ r_i \le p_i''$. Since $p_i' = \lfloor p_i'' \rfloor$ and $r_i$ is an integer, we can also conclude $ r_i \le p_i'$. 
        
        The same holds for the deadlines and their shrunk predictions.
\end{proof}

Therefore, the schedule $YDS(J_{P'Q'})$ is feasible. Although shrinking the intervals and then running YDS is not a robust algorithm, it will be useful to bound its energy consumption when $\eta\le\lambda$ holds.

\begin{restatable}{lemma}{shrunkpredictionerror}
%\begin{lemma} 
\label{shrunkPredictionError}
For any $\eta\in(0,\lambda)$ there holds 
\begin{align*}
        \mathcal{E}_{YDS(J_{P'Q'})}\leq \left(\frac{2\eta + 1}{1-2\lambda}\right)^{\alpha-1}  \cdot  \mathcal{E}_{YDS(J_{RD})}.
    \end{align*}
%\end{lemma}
\end{restatable}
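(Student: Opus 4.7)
The plan is to chain together the two ingredients already proven: the shrinking lemma (Lemma~\ref{lem:ShrinkingRatio}) and the raw prediction error bound (Lemma~\ref{lem:PredictionError}). The intermediate object will be an auxiliary instance $J_{P''Q''}$ in which the symmetric shrinking is performed \emph{exactly}, without the floor/ceiling rounding that appears in Definition~\ref{shrunk Predictions}.

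First, I would define $p_i'' = p_i + \lambda(q_i - p_i)$ and $q_i'' = q_i - \lambda(q_i - p_i)$, and consider $J_{P''Q''}$. Since $p_i' = \lfloor p_i'' \rfloor \le p_i''$ and $q_i' = \lceil q_i'' \rceil \ge q_i''$, the release-time/deadline interval of each job in $J_{P'Q'}$ contains the corresponding interval in $J_{P''Q''}$. Hence every feasible schedule for $J_{P''Q''}$ is also feasible for $J_{P'Q'}$, which gives the (free) inequality
\begin{align*}
    \mathcal{E}_{YDS(J_{P'Q'})} \le \mathcal{E}_{YDS(J_{P''Q''})}.
\end{align*}

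Next, I would apply Lemma~\ref{lem:ShrinkingRatio} with $c = 1 - 2\lambda$, viewing $J_{PQ}$ as the ``original'' instance $\mathcal{J}$ (so that $r_i := p_i$ and $d_i := q_i$). Under this substitution the recipe $\hat{r}_i = r_i + \tfrac{1-c}{2}(d_i - r_i)$ and $\hat{d}_i = d_i - \tfrac{1-c}{2}(d_i - r_i)$ becomes exactly $p_i''$ and $q_i''$, so the shrunk instance produced by the lemma is $J_{P''Q''}$. The lemma then yields
\begin{align*}
    \mathcal{E}_{YDS(J_{P''Q''})} \le \left(\frac{1}{1-2\lambda}\right)^{\alpha-1} \mathcal{E}_{YDS(J_{PQ})}.
\end{align*}

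Finally, I would plug in Lemma~\ref{lem:PredictionError}, namely $\mathcal{E}_{YDS(J_{PQ})} \le (2\eta + 1)^{\alpha-1} \mathcal{E}_{YDS(J_{RD})}$, and compose the three inequalities to obtain the claimed bound
\begin{align*}
    \mathcal{E}_{YDS(J_{P'Q'})} \le \left(\frac{2\eta + 1}{1-2\lambda}\right)^{\alpha-1} \mathcal{E}_{YDS(J_{RD})}.
\end{align*}
I do not expect any serious obstacle here; the only subtlety worth flagging explicitly is that the rounding in Definition~\ref{shrunk Predictions} works \emph{in our favor} (it enlarges, rather than further shrinks, the windows), which is why a direct application of Lemma~\ref{lem:ShrinkingRatio} to the unrounded instance $J_{P''Q''}$ suffices and one does not need a separate argument to absorb the floor/ceiling.
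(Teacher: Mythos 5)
Your proposal is correct and takes essentially the same route as the paper's own proof: both chain Lemma~\ref{lem:ShrinkingRatio} (with $c = 1-2\lambda$, applied to $J_{PQ}$ as the base instance) with Lemma~\ref{lem:PredictionError}. The only difference is that you explicitly absorb the floor/ceiling rounding in the shrunk prediction sets $P', Q'$ via the intermediate unrounded instance $J_{P''Q''}$ --- a detail the paper's proof glosses over by applying Lemma~\ref{lem:ShrinkingRatio} directly to $J_{P'Q'}$ --- and your observation that the rounding only enlarges the windows (so $\mathcal{E}_{YDS(J_{P'Q'})} \le \mathcal{E}_{YDS(J_{P''Q''})}$) is exactly the right way to close that small gap.
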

\begin{proof}
    \begin{align}
    \label{eq:m2}
        \mathcal{E}_{YDS(J_{P'Q'})}\leq \frac{1}{(1-2\lambda)^{(\alpha-1)}}  \cdot  \mathcal{E}_{YDS(J_{PQ})} \nonumber \\
        \leq \left(\frac{2\eta + 1}{1-2\lambda}\right)^{\alpha-1}  \cdot  \mathcal{E}_{YDS(J_{RD})}.
    \end{align}
    By Lemma~\ref{lem:ShrinkingRatio} we have the first inequality in (\ref{eq:m2}), and the second inequality holds because of Lemma~\ref{lem:PredictionError}.
\end{proof}

Similarly for common deadline instances, since we shrink from one side, we obtain a better competitive ratio. 

\begin{restatable}{corollary}{CDshrunk}
%\begin{corollary}
\label{CD_shrunkPredictionError}
For any $\eta\in(0,\lambda)$ in common deadline instances there holds 
\begin{align*}
        \mathcal{E}_{YDS(J_{P'})}\leq \left(\frac{1+ \eta}{1-\lambda}\right)^{\alpha-1}  \cdot  \mathcal{E}_{YDS(J_{R})}.
    \end{align*}
%\end{corollary}
\end{restatable}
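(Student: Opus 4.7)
The plan is to mirror the proof of Lemma~\ref{shrunkPredictionError} very closely, simply replacing the two-sided shrinking tool (Lemma~\ref{lem:ShrinkingRatio}) by the one-sided version (Lemma~\ref{lem:ShrinkfromOneSide}), and the general bound (Lemma~\ref{lem:PredictionError}) by its common-deadline counterpart (Corollary~\ref{CD_PredictionError}). The improvement in the exponent's base from $\tfrac{2\eta+1}{1-2\lambda}$ down to $\tfrac{1+\eta}{1-\lambda}$ comes entirely from the fact that, with a common deadline, shrinking only happens on the release-time side, so each of the two factors loses a factor of $2$ in its $\lambda$ (resp.\ $\eta$) contribution.

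First, I would pass from $J_{P'}$ back to $J_P$. In the common-deadline setting, the job in $J_P$ has release time $p_i$ and deadline $d$, while the corresponding job in $J_{P'}$ has release time $p_i' = \lfloor p_i + \lambda(d - p_i)\rfloor$ and the same deadline $d$. Writing $p_i' = p_i + (1 - c_i)(d - p_i)$ defines a shrinking parameter $c_i$ for each job $i$, and since $p_i' \le p_i + \lambda(d - p_i)$ (the floor only helps), we get $d - p_i' \ge (1 - \lambda)(d - p_i)$, i.e.\ $c_i \ge 1 - \lambda$. Taking $c$ as in Lemma~\ref{lem:ShrinkfromOneSide} then yields $1/c \le 1/(1-\lambda)$, and the lemma gives
\begin{align*}
    \mathcal{E}_{YDS(J_{P'})} \;\le\; \left(\tfrac{1}{1-\lambda}\right)^{\alpha-1} \mathcal{E}_{YDS(J_P)}.
\end{align*}

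Next, I would invoke Corollary~\ref{CD_PredictionError}, which directly bounds the energy of the unfeasible-but-predicted schedule by that of the true YDS schedule:
\begin{align*}
    \mathcal{E}_{YDS(J_P)} \;\le\; (1 + \eta)^{\alpha - 1}\, \mathcal{E}_{YDS(J_R)}.
\end{align*}
Chaining the two inequalities gives the desired bound $\mathcal{E}_{YDS(J_{P'})} \le \bigl(\tfrac{1+\eta}{1-\lambda}\bigr)^{\alpha-1} \mathcal{E}_{YDS(J_R)}$.

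I do not expect a serious obstacle here; the only subtle point is handling the floor in the definition of $p_i'$ when extracting the shrinking parameter $c_i$. Since the floor strictly increases the length of the retained interval $[p_i', d]$ relative to the ``unrounded'' choice $p_i + \lambda(d - p_i)$, the bound $c_i \ge 1 - \lambda$ holds unconditionally (no extra additive slack is needed, unlike in settings where one has to fight the rounding). Note also that the assumption $\eta \in (0, \lambda)$ is what guarantees via Observation~\ref{obs:pjrjqjdj} that $r_i \le p_i'$, so that the schedule $YDS(J_{P'})$ used on the left-hand side is meaningful (i.e., feasible on the true instance), but this assumption is not explicitly needed in the above energy chain — it is inherited from the statements we cite.
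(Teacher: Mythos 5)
Your proof is correct and follows exactly the paper's own argument: the paper also chains Lemma~\ref{lem:ShrinkfromOneSide} (to pass from $J_{P'}$ to $J_P$ at a cost of $(1/(1-\lambda))^{\alpha-1}$) with Corollary~\ref{CD_PredictionError} (to pass from $J_P$ to $J_R$ at a cost of $(1+\eta)^{\alpha-1}$). Your explicit handling of the floor in $p_i'$, showing $c_i \ge 1-\lambda$ unconditionally, is a detail the paper leaves implicit, but it changes nothing about the route.
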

\begin{proof}
    \begin{align}
    \label{eq:CD_m2}
        \mathcal{E}_{YDS(J_{P'})}\leq \frac{1}{(1-\lambda)^{(\alpha-1)}}  \cdot  \mathcal{E}_{YDS(J_{P})} \nonumber \\
        \leq \left(\frac{1 + \eta}{1-\lambda}\right)^{\alpha-1}  \cdot  \mathcal{E}_{YDS(J_{R})}.
    \end{align}
    By Lemma~\ref{lem:ShrinkfromOneSide} we have the first inequality in (\ref{eq:CD_m2}), and the second inequality holds because of Corollary \ref{CD_PredictionError}.
\end{proof}

\section{General Case}
\label{sec:general}
In this section we present algorithm \textsc{ScheduleWithPredictions}$(\lambda,\mu)$ (\textsc{SwP}$(\lambda,\mu)$ for short) for the general learning-augmented speed-scaling setting.
Parameter $0 \le \lambda < 1/2$  describes for which range of prediction errors we would like to obtain an improved competitive ratio. The smaller the $\lambda$, the smaller that range but the better the corresponding competitive ratio for $\eta<\lambda$. On the other hand, parameter $0\le \mu\le 1$ allows us to set the desired trade-off between consistency and robustness. As we will see, perfect predictions and $\lambda = \mu = 0$ would give a competitive ratio of $1$.

Inspired by~\cite{BamasMRS20} algorithm \textsc{SwP} %takes two confidence parameters $\lambda,\mu\in(0,1)$, and 
begins by partitioning each time slot $I_t = [t,t+1), t\in\mathbb{Z}$  into two parts: $I_t^\ell = [t, t+(1-\mu))$ and $I_t^r = [t+(1-\mu),t+1)$. We call $I_t^\ell$ the \emph{left part}, and $I_t^r$ the \emph{right part} of time slot $I_t$. The idea is to reserve the left parts of time-slots for following the prediction, and the right parts of the time-slots are, roughly speaking, intended for safeguarding against inaccurate predictions. A key component of our algorithm consists of elegantly and dynamically distributing the processing volume of each job upon its arrival among the two parts. This distribution is crucial in order to obtain a trade-off between consistency and robustness, based on the parameters $\lambda$ and $\mu$. The algorithm consists of two steps, the \emph{preprocessing} and the \emph{online} step which we now describe in more detail.

\paragraph{Preprocessing: Partition left parts into intervals and assign jobs to them.}
Upon receiving the predictions $(P,Q)$, \textsc{SwP} computes a YDS-schedule $S'$ for instance $(P',Q')$ -- which is obtained by "shrinking" $(P,Q)$ as described above. Although $S'$ may not be feasible for the actual instance $(R,D)$, it will be used to partition the left parts into intervals and subsequently assign each such interval of the partition to a specific job. 

To this end, let $I_t(j) := [t + a_t(j),t + b_t(j))\subseteq I_t$ be the maximal subinterval of $I_t$ during which $j$ is executed under $S'$. Note that $I_t(j)$ could be empty for some combinations of $j$ and $t$. Furthermore, since by definition there are no release times or deadlines within $I_t$, and YDS schedules according to EDF, there can be at most one execution interval of $j$ within $I_t$. Let, for every job $j$ and left part $I_t^\ell$, $I_t^\ell(j):= [t+a_t^\ell(j),t+b_t^\ell(j))$, where $a_t^\ell(j) = a_t(j)/(1-\mu)$ and $b_t^\ell(j)=b_t(j)/(1-\mu)$, be the subinterval of $I_t^\ell$ assigned to job $j$.

To obtain some intuition, scheduling the whole processing volume of each job $j$ at a uniform speed throughout intervals $I_t^\ell(j)$  would result in a "compressed" version of $YDS(P'Q')$ where each time-slot is sped-up by a factor of $1/(1-\mu)$ to fit in the left part only, thus having an energy consumption increased by a factor of $(1/(1-\mu))^\alpha$ over that of $YDS(P'Q')$. Although (as we will see) such a compressed schedule would be consistent, it may not be robust (or even feasible) in the presence of subpar predictions. For this reason, we will eventually only schedule part of the volume of each job in the associated left parts whenever feasible, and the remaining volume will be processed on right parts.

\paragraph{Online Step: Job arrivals and processing}
\textsc{SwP} needs to decide exactly when each job is to be processed within each time-slot and at what speed. This is done by (i) distributing the processing volume of each job $j$ to right parts of different time-slots $I_t$ and associated left parts $I_t^\ell(j)$ upon its arrival, and (ii) feasibly scheduling the whole volume assigned to the current time-slot (both to its left and right part), within the time-slot itself. In the following we discuss how this is accomplished.

\subparagraph{(i) Job Arrivals:} Upon arrival of job $j$ at $r_j$, let $\delta_j = w_j/(d_j-r_j)$ be its density and $ \ell(j) := \sum_{t\in[r_j,d_j)} |I_t^\ell(j)|$  be the total processing time reserved for job $j$ on the left parts during the preprocessing step that can actually be feasibly used for job $j$. Furthermore let $V_t(j)$ be the total volume currently (from jobs $1,2,\dots j-1$) assigned to $I_t^r$, for all $t$ (thus $V_t(1) = 0)$. 

The algorithm assigns some amount of volume $y_j^t$ (to be determined later) of job $j$ to interval $I_t^r$ (thus $V_t(j+1) := V_t(j) + y_j^t$), for all $t\in [r_j,d_j)$, with $0 \le y_j^t \le \delta_j$.
Finally the remaining volume $X_j := w_j - \sum_t y_j^t$ is assigned to the left parts $I_t^\ell(j)$ with $t\in[r_j,d_j)$, proportionally to their length, i.e., an interval $I_t^\ell(j)$ with $t\in[r_j,d_j)$ receives an $|I_t^\ell(j)|/\ell(j)$-fraction of $X_j$ which implies that the average speed within $I_t^\ell(j)$ must be $X_j/\ell(j)$. To gain some intuition on the values of $y_j^t$, it is useful to think of the algorithm as waterfilling the volume of $j$ to both the left and the right parts such that no right part receives more than $\delta_j$ amount of volume. More formally, the $y_j^t$, with $0\le y_j^t\le \delta_j$ and $t\in[r_j,d_j)$ are defined such that they satisfy the following inequalities:

 \begin{align}
        & \frac{V_t(j)}{\mu} \ge X_j/\ell(j) & \forall t\in[r_j,d_j) \text{ with } y_j^t = 0 \label{eq:y_full}\\
        & \frac{V_t(j) + y_j^t}{\mu} = X_j/\ell(j)  & \forall t\in[r_j,d_j) \text{ with } 0 < y_j^t < \delta_j \label{eq:y_between}\\
        & \frac{V_t(j) + y_j^t}{\mu} \le X_j/\ell(j) & \forall t\in[r_j,d_j) \text{ with } y_j^t = \delta_j. \label{eq:y_zero}
\end{align} 

Note that the left hand side in each of the above inequalities corresponds exactly to $V_t(j+1)/\mu$ and therefore to the average speed required to process the volume assigned to $t$ before the arrival of job $j+1$ within $I_t^r$.
We prove the existence of such $y_j^t$ and describe how they can be computed in Appendix~\ref{app:y}.

\subparagraph{(ii) Processing:} For each $I_t, t=r_j,\dots r_{j+1}-1$ the algorithm processes job $j'\le j$ within every $I_t^\ell(j')$ at a speed of $X_{j'}/\ell(j')$, and the assigned volume to $I_t^r$ is processed within $I_t^r$ at a speed of $V_t(j+1)/\mu$, with the order of the jobs within each $I_t^r$ being determined by EDF.

\begin{figure}[t]

\centering 
\begin{tikzpicture}
%\scalebox{.65}{
\draw[thick,->] (0,0) -- (10,0) node[anchor=north west] {Time};
\draw[thick,->] (0,0) -- (0,4) node[anchor=south east] {Speed};
\foreach \x in {0,1,2,3,4,5,6,7,8,9}
   \draw (\x cm,1pt) -- (\x cm,-1pt) node[anchor=north] {$\x$};
   
\filldraw[fill=red!40!white, draw=black] (2,0) rectangle (2.75,2.5);
\filldraw[fill=gray!40!white, draw=black] (2.75,0) rectangle (3,3.3);
   
\filldraw[fill=red!40!white, draw=black] (3,0) rectangle (3.75,2.5);
\filldraw[fill=gray!40!white, draw=black] (3.75,0) rectangle (4,2.5);

\filldraw[fill=gray!40!white, draw=black] (4.75,0) rectangle (5,2.5);

\filldraw[fill=gray!40!white, draw=black] (5.75,0) rectangle (6,2.5);

\filldraw[fill=red!40!white, draw=black] (6,0) rectangle (6.75,2.5);
\filldraw[fill=gray!40!white, draw=black] (6.75,0) rectangle (7,1.7);
\filldraw[fill=red!40!white, draw=black] (6.75,1.7) rectangle (7,2.5);

\filldraw[fill=gray!40!white, draw=black] (7.75,0) rectangle (8,1.3);
\filldraw[fill=red!40!white, draw=black] (7.75,1.3) rectangle (8,2.5);

\filldraw[fill=gray!40!white, draw=black] (8.75,0) rectangle (9,0.5);
\filldraw[fill=red!40!white, draw=black] (8.75,0.5) rectangle (9,2.5);
%}

\end{tikzpicture}

\caption{The speed profile corresponds to an instance with $\mu = 0.25$. Job $i$ arrives at $r_i=2$, with $d_i=9$, and $w_i=7$. Hence, $\delta_i = \frac{w_i}{d_i - r_i}=1$. For this instance, $YDS(P'Q')$ runs job $i$ only in 3 blocks, so we have $\ell(i) = 3 \cdot 0.75 = 2.25$. 
For the first four blocks we have $y^t_i = 0$ and inequality~(\ref{eq:y_full}) holds. In the fifth and sixth blocks, $0 < y^t_i < \delta_i$ and inequity~(\ref{eq:y_between}) holds. And in the last block, $y^t_i = \delta_i =1$ and inequality~(\ref{eq:y_zero}) holds.}
\label{fig1}
\end{figure}
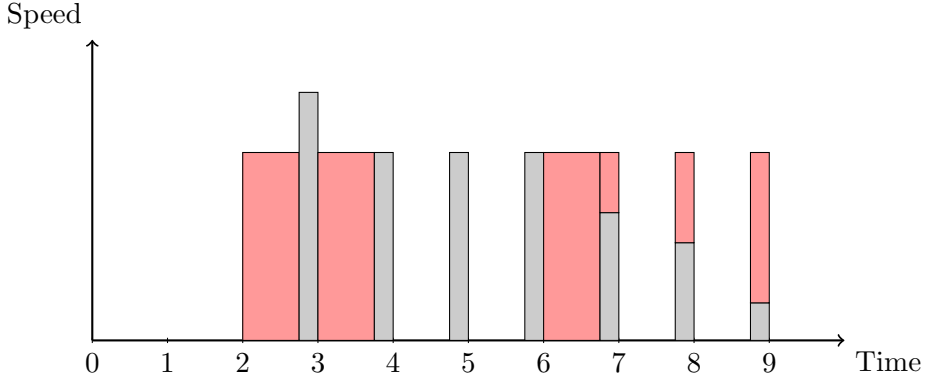

The online step gets repeated upon the arrival of each job. We next show that the resulting schedule is feasible.
\begin{restatable}{lemma}{generalfeasibility}
\label{lem:generalfeasibility}
In the schedule output by $\textsc{SwP}(\lambda,\mu)$ a volume of $w_j$ is fully processed for each job $j$ within $[r_j,d_j)$.
\end{restatable}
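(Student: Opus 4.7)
The plan is to verify three things in turn: (a) that the algorithm assigns exactly $w_j$ units of volume of job $j$; (b) that these units are placed entirely within time-slots indexed by $t\in[r_j,d_j)$; and (c) that within each such time-slot the assigned volume is actually processed by the end of the slot.

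For (a), I would simply appeal to the definition $X_j = w_j - \sum_{t\in[r_j,d_j)} y_j^t$, so that the total volume placed in left parts ($X_j$) plus the total placed in right parts ($\sum_t y_j^t$) equals $w_j$. Non-negativity of $X_j$ follows from the cap $y_j^t \le \delta_j$ together with $\delta_j(d_j-r_j)=w_j$. For (b), I would note that $y_j^t$ is by construction only defined for $t\in[r_j,d_j)$, and the associated right part $I_t^r\subseteq I_t\subseteq[r_j,d_j)$; moreover the left-part contribution $X_j$ is distributed proportionally over the collection $\{I_t^\ell(j):t\in[r_j,d_j)\}$, each of which sits in $[r_j,d_j)$.

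For (c), I would argue separately for each time-slot $I_t$. In the left part $I_t^\ell$, the sub-intervals $I_t^\ell(j')$ are the images of the YDS-execution intervals $I_t(j')$ in $S'$ under the compression of $I_t$ into $I_t^\ell$; since YDS is EDF-based and runs at most one job at a time, these sub-intervals are pairwise disjoint and fit inside $I_t^\ell$. Running job $j'$ during $I_t^\ell(j')$ at the fixed speed $X_{j'}/\ell(j')$ processes $X_{j'}\cdot|I_t^\ell(j')|/\ell(j')$ volume, which telescopes to $X_{j'}$ across all $t$. In the right part $I_t^r$ (of length $\mu$), running at speed $V_t(n+1)/\mu$ processes exactly the total assigned volume $V_t(n+1)$. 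The internal EDF order inside $I_t^r$ does not jeopardize feasibility, because every job $j'$ that contributed to $I_t^r$ satisfies $r_{j'}\le t$ and, since deadlines are integers, $d_{j'}\ge t+1$, so $I_t^r\subseteq[r_{j'},d_{j'})$ uniformly.

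The main obstacle is the existence of values $y_j^t$ satisfying the waterfilling conditions \eqref{eq:y_full}--\eqref{eq:y_zero}, since everything above takes these for granted; this is deferred to Appendix~\ref{app:y} and I would simply invoke it. A secondary subtlety is the degenerate case $\ell(j)=0$: here I would interpret $X_j/\ell(j)$ as $+\infty$ in \eqref{eq:y_full}--\eqref{eq:y_zero}, argue that this forces $y_j^t=\delta_j$ for every $t\in[r_j,d_j)$, and conclude $X_j=0$ so that the left-part distribution is vacuous. With these two points addressed, the feasibility claim reduces to the bookkeeping steps (a)--(c) above.
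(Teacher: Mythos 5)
Your proposal is correct and takes essentially the same route as the paper's own proof: first the bookkeeping that $X_j+\sum_t y_j^t=w_j$ and that all of this volume lands in slots $I_t$ with $t\in[r_j,d_j)$, then per-slot feasibility, i.e.\ that the disjoint compressed YDS sub-intervals $I_t^\ell(j')$ run at speed $X_{j'}/\ell(j')$ suffice for the left parts and that the right part of length $\mu$ processes its assigned volume $V_t(j+1)$ (equivalently $V_t(n+1)$, since later jobs cannot assign to past slots) at speed $V_t(j+1)/\mu$. Your explicit treatment of the degenerate case $\ell(j)=0$ and of the non-negativity of $X_j$ goes slightly beyond the paper, which leaves these points implicit, but this is a refinement rather than a different argument.
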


\begin{proof}
    It is relatively easy to see that a total volume of $w_j$ is assigned to left and right parts of $I_t$'s with $t\in[r_j,d_j)$: Indeed, by the algorithm definition volume of $w_j$ only gets assigned to  $I_t^\ell\subset I_t$  or $I_t\subset I_t$ with $t\in[r_j,d_j)$. In addition, a volume of $\sum_t y_j^t$ gets assigned to the right parts and $w_j-\sum_t y_j^t$ to the left parts for a total volume assignment of $w_j$. It therefore remains to show that all the assigned volume is feasibly processed in the processing step.
    
    Consider some $I_t$ with $r_j \le t < r_{j+1}$ and the corresponding $I_t^\ell$ and $I_t^r$. Note that by the above argument, for any such $t$ no job with an index greater than $j$ will assign any volume, and that only job $j'\le j$ may be assigned to $I_t^\ell(j')$. Therefore a speed of $X_{j'}/\ell(j')$ throughout every such $I_t^\ell(j)$ is sufficient to schedule all volume assigned to it. Finally, since there are no release times or deadlines within each individual interval, the total volume of $V_t(j+1)$ can be feasibly scheduled within $I_t^r$ at a speed of $V_t(j+1)/\mu$.
\end{proof}

We next show consistency and robustness of the algorithm.

\begin{lemma}[Consistency] 
\label{lem:general-consistency}
For any $\eta\in (0,\lambda)$ there holds
\begin{align*}
     \mathcal{E}_\textsc{SwP}   \le \left(\frac{1}{1-\mu} \right)^{\alpha-1} \left(\frac{2\eta + 1}{1-2\lambda}\right)^{\alpha-1} \mathcal{E}_{YDS(RD)}.
\end{align*}
\end{lemma}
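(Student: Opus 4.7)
My plan is to first establish the intermediate bound
\begin{align*}
\mathcal{E}_{\textsc{SwP}} \le \left(\frac{1}{1-\mu}\right)^{\alpha-1} \mathcal{E}_{YDS(J_{P'Q'})},
\end{align*}
and then chain this with Lemma~\ref{shrunkPredictionError} (which directly bounds $\mathcal{E}_{YDS(J_{P'Q'})}$ by $\left(\frac{2\eta+1}{1-2\lambda}\right)^{\alpha-1}\mathcal{E}_{YDS(J_{RD})}$) to obtain the claimed inequality. The first factor $(1/(1-\mu))^{\alpha-1}$ captures that we are morally compressing a YDS-schedule on $(P',Q')$ into only the left parts of the time-slots.

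To prove the intermediate bound, I would first write the algorithm's total energy as
$\mathcal{E}_{\textsc{SwP}} = \sum_j X_j^\alpha/\ell(j)^{\alpha-1} + \sum_t (V_t^{\mathrm{fin}})^\alpha/\mu^{\alpha-1}$,
using the constant speeds $X_j/\ell(j)$ on each $I_t^\ell(j)$ and $V_t^{\mathrm{fin}}/\mu$ on each $I_t^r$, where $V_t^{\mathrm{fin}}$ is the final volume assigned to $I_t^r$. Because $V_t(1)=0$, I can telescope the right-part sum over jobs and rewrite the total as $\sum_j \Delta_j$, where
\begin{align*}
\Delta_j := \frac{X_j^\alpha}{\ell(j)^{\alpha-1}} + \sum_{t\in[r_j,d_j)} \frac{(V_t(j)+y_j^t)^\alpha - V_t(j)^\alpha}{\mu^{\alpha-1}}
\end{align*}
is the local energy increase triggered by job $j$. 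The crux of the proof is to show $\Delta_j \le w_j^\alpha/\ell(j)^{\alpha-1}$ for each $j$. For this, I would interpret the three-case water-filling rule (\ref{eq:y_full})--(\ref{eq:y_zero}) as precisely the KKT conditions of the convex program $\min X_j^\alpha/\ell(j)^{\alpha-1} + \sum_t (V_t(j)+y_j^t)^\alpha/\mu^{\alpha-1}$ under $X_j+\sum_t y_j^t = w_j$, $X_j\ge 0$, $0\le y_j^t\le \delta_j$: equation (\ref{eq:y_between}) corresponds to the interior stationarity $\partial f/\partial X_j = \partial f/\partial y_j^t$, while (\ref{eq:y_full}) and (\ref{eq:y_zero}) match the subgradient inequalities at the boundaries $y_j^t=0$ and $y_j^t=\delta_j$. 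Since the objective is strictly convex on a convex feasible set, the water-filling allocation is the global minimizer. Comparing to the feasible ``all-left'' allocation $X_j=w_j$, $y_j^t=0$, whose objective value is $w_j^\alpha/\ell(j)^{\alpha-1} + \sum_t V_t(j)^\alpha/\mu^{\alpha-1}$, yields the required $\Delta_j \le w_j^\alpha/\ell(j)^{\alpha-1}$.

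Summing over $j$ and using $\ell(j)=(1-\mu)L(j)$, where $L(j):=\sum_t |I_t(j)|$ is the total execution time of $j$ in $YDS(J_{P'Q'})$, yields $\mathcal{E}_{\textsc{SwP}}\le (1-\mu)^{-(\alpha-1)}\sum_j w_j^\alpha/L(j)^{\alpha-1}$. A final Jensen's inequality applied to the speed profile of each $j$ under $YDS(J_{P'Q'})$ (whose average speed on $j$ is $w_j/L(j)$) gives $\sum_j w_j^\alpha/L(j)^{\alpha-1}\le \mathcal{E}_{YDS(J_{P'Q'})}$, completing the intermediate bound; Lemma~\ref{shrunkPredictionError} then finishes the argument. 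I expect the main obstacle to be the careful verification of the KKT correspondence at the two constraint boundaries with the correct sign on the subgradient inequalities, together with invoking strict convexity to upgrade from stationarity to global optimality; the rest is routine manipulation.
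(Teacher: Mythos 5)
Your proposal is correct, and at the structural level it matches the paper's proof exactly: the same telescoping decomposition of $\mathcal{E}_{\textsc{SwP}}$ into per-job increments $\Delta_j$, the same crux inequality $\Delta_j \le w_j^\alpha/\ell(j)^{\alpha-1}$, and the same final chaining through $\left(\frac{1}{1-\mu}\right)^{\alpha-1}\mathcal{E}_{YDS(J_{P'Q'})}$ and Lemma~\ref{shrunkPredictionError}. Where you genuinely diverge is in how the crux inequality is proved. The paper uses two elementary convexity estimates: each right-part increment $(V_t(j)+y_j^t)^\alpha - V_t(j)^\alpha$ is bounded by $\alpha (V_t(j)+y_j^t)^{\alpha-1}y_j^t$ and then charged at marginal rate at most $\alpha (X_j/\ell(j))^{\alpha-1}$ using Equations~(\ref{eq:y_between}) and~(\ref{eq:y_zero}), after which the tangent-line inequality $X_j^\alpha + \alpha X_j^{\alpha-1}(w_j-X_j)\le w_j^\alpha$ closes the bound. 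You instead recognize Equations~(\ref{eq:y_full})--(\ref{eq:y_zero}) as the KKT conditions of the convex water-filling program and compare its optimum against the feasible all-left point $(X_j,y_j^t)=(w_j,0)$. This is valid (KKT is sufficient for global optimality of a convex program; strict convexity is needed only for uniqueness, not for this step), and it is arguably more conceptual, since it explains the water-filling rule as an energy-minimizing allocation; the paper's route is more elementary, needing only two lines of convexity. Your concluding Jensen step is likewise fine and slightly more robust than the paper's, which asserts equality using the fact that YDS runs each job at a constant speed. Two points you should tighten: (i) identifying the multiplier $\nu = \alpha X_j^{\alpha-1}/\ell(j)^{\alpha-1}$ from stationarity in $X_j$ presupposes $X_j>0$, so add the one-line remark that $X_j=0$ is incompatible with Equation~(\ref{eq:y_zero}) when $w_j>0$ and $\ell(j)>0$; and (ii) the identity $\ell(j)=(1-\mu)L(j)$ is precisely where the hypothesis $\eta\in(0,\lambda)$ enters, via Observation~\ref{obs:pjrjqjdj} (which gives $[p'_j,q'_j)\subseteq [r_j,d_j)$, so every execution interval of $j$ under $YDS(J_{P'Q'})$ is usable); without citing it, $\ell(j)$ could be strictly smaller than $(1-\mu)L(j)$ and your final bound would not follow.
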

\begin{proof}
        We can express $\mathcal{E}_\textsc{SwP}$  as:
       \begin{align*}
        \mathcal{E}_\textsc{SwP}  = \sum_{j=1}^n \left( \frac{X_j^\alpha}{\ell(j)^{\alpha-1}} + \sum_{t\in[r_j,d_j)}\left(\frac{V_t(j+1)^\alpha}{\mu^{\alpha-1}}- \frac{V_t(j)^\alpha}{\mu^{\alpha-1}}\right)\right)\\
    =         \sum_{j=1}^n \left( \frac{X_j^\alpha}{\ell(j)^{\alpha-1}} + \sum_{t\in[r_j,d_j)}\left(\frac{\left(V_t(j) + y_j^t\right)^{\alpha}}{\mu^{\alpha-1}}- \frac{V_t(j)^{\alpha}}{\mu^{\alpha-1}}\right)\right)\\
\le  \sum_{j=1}^n \frac{w_j^\alpha}{\ell(j)^{\alpha-1}}% \sum_{t\in[r_j,d_j)}\left(\frac{\left(V_t(j-1) + \delta_j \right)^{\alpha}}{\mu^{\alpha-1}}- \frac{V_t(j-1)^{\alpha}}{\mu^{\alpha-1}}\right)\\
        = \left( \frac{1}{1-\mu}\right)^{\alpha-1}\mathcal{E}_{YDS(J_{P'Q'})}^j.
    \end{align*}
 The inequality holds by convexity of the power function and by the fact that $V_t(j+1)/\mu \le X_j/\ell(j)$ for each $t$ such that $y_j^t>0$  (Equations~\ref{eq:y_between} and~\ref{eq:y_zero}). The last equality follows since for $\eta\in (0,\lambda)$, for every job $j$ there holds $[r_j,d_j)\supseteq [p'_j, q'_j)$ (Observation~\ref{obs:pjrjqjdj}), and because by construction $\ell(j)$ is $1/(1-\mu)$ times the total processing time reserved for job $j$ under YDS$(P'Q')$.
    
    The lemma directly follows, since by Lemma~\ref{shrunkPredictionError}, 
    \begin{align*}
        \mathcal{E}_{YDS(J_{P'Q'})}\leq \left(\frac{2\eta + 1}{1-2\lambda}\right)^{\alpha-1}  \cdot  \mathcal{E}_{YDS(J_{RD})}.
        \end{align*}    
  
\end{proof}

\begin{lemma}[Robustness]\label{Robustness}
\label{lem:general-robustness}
    For any instance, we have
    \begin{align*}
        \mathcal{E}_\textsc{SwP} \le 2^{\alpha-1}\alpha^\alpha \left(\frac{1}{\mu}\right)^{\alpha-1}\mathcal{E}_{YDS(J_{RD})}. 
    \end{align*}
\end{lemma}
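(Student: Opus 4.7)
The plan is to show $\mathcal{E}_{\textsc{SwP}}\le (1/\mu)^{\alpha-1}\mathcal{E}_{AVR}$ and then invoke the known AVR bound $\mathcal{E}_{AVR}\le 2^{\alpha-1}\alpha^\alpha\,\mathcal{E}_{YDS(J_{RD})}$. The intuition is that \textsc{SwP}'s waterfilling costs no more than the ``all-on-right-uniform'' alternative that assigns every job's full volume uniformly to the right parts at its AVR-density $\delta_j = w_j/(d_j-r_j)$. That alternative runs $I_t^r$ at speed $A(t)/\mu$, where $A(t):=\sum_{j:\, t\in[r_j,d_j)}\delta_j$ is the AVR speed, so its total energy is $\sum_t A(t)^\alpha/\mu^{\alpha-1}$, matching the target upper bound once the AVR ratio is applied.

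The central observation would be that the waterfilling equations \eqref{eq:y_full}--\eqref{eq:y_zero} are precisely the KKT optimality conditions for the convex program
\[
\min\; \frac{X^\alpha}{\ell(j)^{\alpha-1}} + \sum_{t\in[r_j,d_j)}\frac{(V_t(j)+y^t)^\alpha - V_t(j)^\alpha}{\mu^{\alpha-1}}
\]
subject to $X+\sum_t y^t=w_j$, $0\le y^t\le \delta_j$, $X\ge 0$, with common Lagrange multiplier $\alpha X_j^{\alpha-1}/\ell(j)^{\alpha-1}$. The objective here is exactly the \emph{marginal} energy $E_j^{\textsc{SwP}}$ that \textsc{SwP} pays for job $j$ given the current state. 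Since the feasible choice $(X=0,\ y^t=\delta_j)$ corresponds to the alternative's assignment for job $j$, minimality immediately yields $E_j^{\textsc{SwP}}\le \mu^{-(\alpha-1)}\sum_{t\in[r_j,d_j)}\bigl((V_t(j)+\delta_j)^\alpha-V_t(j)^\alpha\bigr)$. To convert this right-hand side into a telescoping quantity, I would prove by induction on $j$ the ``domination'' $V_t(j)\le V_t^\star(j)$ for every $t$, where $V_t^\star(j):=\sum_{j'<j,\, t\in[r_{j'},d_{j'})}\delta_{j'}$ is the corresponding accumulated value under the alternative; the inductive step follows immediately from $y_j^t\le \delta_j$.

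Combining the domination with the monotonicity of $x\mapsto (x+\delta_j)^\alpha-x^\alpha$ on $[0,\infty)$ (whose derivative $\alpha((x+\delta_j)^{\alpha-1}-x^{\alpha-1})$ is nonnegative for $\alpha\ge 1$) upgrades the bound to
\[
E_j^{\textsc{SwP}} \;\le\; \frac{1}{\mu^{\alpha-1}}\sum_{t\in[r_j,d_j)}\!\!\bigl((V_t^\star(j)+\delta_j)^\alpha - V_t^\star(j)^\alpha\bigr).
\]
Swapping the order of summation and telescoping per slot---for each fixed $t$, the values $V_t^\star(j)$ climb from $0$ to $A(t)$ as $j$ ranges over the jobs whose window contains $t$, so the inner sum collapses to $A(t)^\alpha$---together with the identity $\sum_j E_j^{\textsc{SwP}}=\mathcal{E}_{\textsc{SwP}}$ already used in the proof of Lemma~\ref{lem:general-consistency}, produces $\mathcal{E}_{\textsc{SwP}}\le \sum_t A(t)^\alpha/\mu^{\alpha-1}$. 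Applying the AVR bound finishes the proof. The main obstacle will be carefully establishing the KKT/minimality interpretation of the waterfilling, including the degenerate case $\ell(j)=0$ in which the left-part term is vacuous and the algorithm is forced into $X_j=0$, $y_j^t=\delta_j$---which is precisely the alternative's prescription for~$j$, so the inequality holds with equality there.
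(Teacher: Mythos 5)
Your proof is correct, and at the top level it takes the same route as the paper: both reduce the claim to the intermediate bound $\mathcal{E}_\textsc{SwP} \le (1/\mu)^{\alpha-1}\mathcal{E}_{AVR}$ (the energy of a ``compressed AVR'' that squeezes each slot's AVR volume into its right part) and then invoke the $2^{\alpha-1}\alpha^\alpha$ competitive ratio of AVR. Where you genuinely differ is in how that intermediate inequality is established. The paper argues globally and in one shot: starting from $\sum_j X_j^\alpha/\ell(j)^{\alpha-1} + \sum_t V_t(n+1)^\alpha/\mu^{\alpha-1}$, it charges the left-part energy, via convexity and the inequality $V_t(n+1)/\mu \ge V_t(j+1)/\mu \ge X_j/\ell(j)$ (valid wherever $y_j^t < \delta_j$, i.e., exactly where deficit volume $\delta_j - y_j^t$ exists), against the energy gap between the final right-part volumes $V_t(n+1)$ and the AVR volumes $\sum_{j: t\in[r_j,d_j)}\delta_j$. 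You instead argue per job: you identify Equations~(\ref{eq:y_full})--(\ref{eq:y_zero}) as the KKT conditions of the convex program minimizing job $j$'s marginal energy (a correct and nice structural observation --- the paper only ever uses these relations as one-sided inequalities), so the waterfilling is dominated by the feasible point $(X=0,\ y^t=\delta_j)$, whose feasibility follows from $\sum_{t\in[r_j,d_j)}\delta_j = w_j$. This costs you two additional ingredients that the paper's global charging avoids --- the coupling $V_t(j)\le V_t^\star(j)$ and the monotonicity of $x\mapsto(x+\delta_j)^\alpha-x^\alpha$ for $\alpha\ge 1$ --- before the per-slot telescoping yields $\sum_t A(t)^\alpha/\mu^{\alpha-1}$; all three steps check out. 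Your route is somewhat longer, but it buys transparency about \emph{why} the waterfilling conditions have their particular form, and it explicitly treats the degenerate case $\ell(j)=0$ (forced all-right assignment), which the paper's proof passes over silently. Both arguments are sound.
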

\begin{proof}
    Note that by the algorithm definition there holds that $V_t(j) \ge V_t(i)$, for $j>i$ and any $t$, since upon each release time new volume gets assigned but volume never gets removed. 
    We therefore have 
    \begin{align*}
        \mathcal{E}_\textsc{SwP} \le 
        \sum_{j=1}^n \left( \frac{X_j^\alpha}{\ell(j)^{\alpha-1}}\right) 
        + \sum_{t}\left(\frac{V_t(n+1)^\alpha}{\mu^{\alpha-1}}\right)\\
        \le  \sum_{t}\left(\frac{\left(\sum_{j:t\in[r_j,d_j)}\delta_j\right)^\alpha}{\mu^{\alpha-1}}\right)\\
        = \left(\frac{1}{\mu}\right)^{\alpha-1}\mathcal{E}_{AVR}.
    \end{align*}
    The second inequality follows by the convexity of the power function and the fact that $V_t(n+1)/\mu\ge V_t(j+1)/\mu\ge X_j/\ell(j)$ for each $t$ such that $y_j^t<\delta_j$ (Equations~\ref{eq:y_between} and~\ref{eq:y_full}).
    The lemma follows by the competitive ratio of AVR~\cite{Bansal-AVR}. %\todo[inline]{Add a reference to the theorem regarding the AVR-ratio in the intro? or to the actual paper?}
\end{proof}

%\begin{theorem}
%\label{thm:main_general_case}
%        Algorithm \textsc{SwP} achieves  competitive ratio of
%        \begin{align*}
%            \min\left\{\left(\frac{1}{1-\mu} \right)^{\alpha-1} \left(\frac{2\eta + %1}{1-2\lambda}\right)^{\alpha-1} , 2^{\alpha-1}\alpha^\alpha %\left(\frac{1}{\mu}\right)^{\alpha-1}\right\}.
%        \end{align*}
%\end{theorem}
Lemmas~\ref{lem:general-consistency} and~\ref{lem:general-robustness} together directly imply Theorem~\ref{thm:main_general_case}. 
Note that Theorem~\ref{thm:main_general_case} not only implies consistency and robustness, but also smoothness: the competitive ratio gracefully degrades as the error increases.

\section{All Jobs Have a Common Deadline}
\label{sec:common-deadline}
In this section, we present a simpler algorithm that achieves improved consistency and robustness over \textsc{SwP} for the special case in which all jobs have the same deadline, i.e., $d_j=d$ for all $j \in \mathcal{J}$.
Since the deadline is the same for all jobs, we only consider predictions on the $n$ release times $R= \{r_1, \ldots, r_n\}$ and denote these by a set $P=\{p_1, \ldots, p_n \}$.

We begin by analyzing a framework for combining different algorithms before presenting an algorithm in Subsection~\ref{sec:cdswp} that is based on combining two different algorithms; the classic online algorithm $qOA$ that has a worst-case guarantee independent of the prediction error, and a second one, that considers the predictions and has a good performance in the case of small prediction error. 

The general idea of combining online algorithms has been repeatedly employed in the past in the areas of online algorithms and online learning, see, for example, the celebrated results of Fiat et al.~\cite{FIAT1994410}, Blum and Burch~\cite{blum2000line}, Herbster and Warmuth~\cite{herbster1998tracking}, Littlestone and Warmuth~\cite{LITTLESTONE1994212}.  Such a technique has also been used in the learning augmented setting, see Antoniadis et al.~\cite{antoniadis2020online} for an explicit framework for combining algorithms, and Lykouris and Vassilvtiskii~\cite{lykouris2018competitive} as well as Rohatgi~\cite{rohatgi2020near} for implicit uses of such algorithm combinations. However, as we will see, the specific problem considered in this paper allows for way more flexibility in such algorithm combinations since it is possible to simulate the parallel execution of different algorithms by increasing the speed. This allows us to obtain a much more tailored result with at most one switch between the different algorithms and more straightforward analysis.
We start with the following structural lemma.

\begin{restatable}{lemma}{merge}
\label{lem:merge}
Consider a partition of the job set of instance $J$ into $m$ job sets $J_1,J_2,\dots J_m$,
 and furthermore consider $m$ schedules $C_1,C_2,\dots C_m$ with speed functions $s_1(t),s_2(t),\dots s_m(t)$ respectively, such that $C_i$ is a feasible schedule for $J_i$ for all $i= 1,\dots m$. Then there exists a schedule $C$  with speed function $s_{C}(t)=\sum_i s_i(t)$ that is feasible for the complete job  set $J$ and has an energy consumption of $\mathcal{E}_{C}\le m^{\alpha-1} \sum_i \mathcal{E}_i$, where for each $i$, $\mathcal{E}_{i} = \int_t s_{i}(t)^\alpha dt $ is the energy consumption of the respective schedule.
\end{restatable}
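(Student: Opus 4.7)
The plan is to first construct the combined schedule $C$ explicitly (not merely define its speed function), then verify feasibility, and finally bound its energy via convexity.

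For the construction, I would think of $C$ as obtained by "vertically stacking and horizontally compressing" the $m$ schedules. Fix an infinitesimal time slice $[t, t+dt)$. Each $C_i$ processes some job $j_i(t) \in J_i$ (or idles) at speed $s_i(t)$, hence a volume $s_i(t)\, dt$ of $j_i(t)$. In $C$, I would run the processor at speed $s_C(t) = \sum_i s_i(t)$ throughout $[t, t+dt)$ and process the same volume $s_i(t)\, dt$ of each $j_i(t)$ within this slice, by executing them sequentially: first $j_1(t)$ for a duration $s_1(t)/s_C(t)\cdot dt$, then $j_2(t)$, and so on. Since speeds are bounded and, for all but finitely many $t$, the $j_i(t)$ are locally constant, one can formalize this by refining any common grid of breakpoints (release times, deadlines, and switching times of the $C_i$) and defining the schedule on each sub-interval by concatenation. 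Every job in $J_i$ thus receives in $C$ exactly the same processing volume during exactly the same time points as it did in $C_i$.

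Feasibility is then immediate: for any job $j \in J_i$, its total processing in $C$ equals its total processing in $C_i$, which is $w_j$, and all of this processing takes place within $[r_j, d_j)$ because the same was true in $C_i$. The job assignment is well-defined because within any infinitesimal slice we only assign time to jobs $j_1(t), \ldots, j_m(t)$ in disjoint sub-intervals.

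For the energy bound, the key ingredient is the convexity inequality $\left(\sum_{i=1}^m x_i\right)^\alpha \le m^{\alpha-1} \sum_{i=1}^m x_i^\alpha$, valid for $\alpha \ge 1$ and $x_i \ge 0$, which follows from the power-mean inequality (or equivalently from Jensen applied to $x^\alpha$). Applying this pointwise to $x_i = s_i(t)$ and integrating gives
\begin{align*}
\mathcal{E}_C = \int_0^\infty \left(\sum_{i=1}^m s_i(t)\right)^\alpha dt \le m^{\alpha-1} \int_0^\infty \sum_{i=1}^m s_i(t)^\alpha\, dt = m^{\alpha-1} \sum_{i=1}^m \mathcal{E}_i.
\end{align*}

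The only step that requires care is the construction itself: one must argue that the sequential-execution scheme inside each slice actually yields a well-defined schedule (a measurable assignment of jobs to time points) rather than just a speed function. This is the main conceptual obstacle, but it is handled routinely by partitioning the time axis using the finite set of breakpoints of all the $C_i$ and then subdividing each resulting piecewise-constant-speed interval proportionally to the $s_i(t)$. Once this is done, feasibility and the energy bound follow as above.
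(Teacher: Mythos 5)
Your proposal is correct and follows essentially the same approach as the paper: both construct $C$ by partitioning time at breakpoints, preserving within each resulting interval the volume that each job received under its own schedule $C_i$ (which is valid since no release times or deadlines fall inside an interval), and then bound the energy via the pointwise convexity inequality $\left(\sum_i s_i(t)\right)^\alpha \le m^{\alpha-1}\sum_i s_i(t)^\alpha$ integrated over time. Your slice-level time-sharing is just a finer-grained version of the paper's per-interval volume-preservation argument, and your direct feasibility check replaces the paper's (equivalent) argument by contradiction.
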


\subsection{Algorithm \textsc{CommonDeadlineScheduleWithPredictions} (\textsc{CDSwP})}
\label{sec:cdswp}
At a high level \textsc{CDSwP}$(\lambda)$ (almost) follows the optimal schedule for the predicted instance as long as the prediction error is not higher than $\lambda$ and switches to a classical online algorithm (i.e., one without predictions) in case the prediction error becomes higher than $\lambda$. 

More formally,  the algorithm can reside in one of two modes: \emph{follow the prediction} (FtP) mode, and \emph{recovery} mode. Initially, before the release time $r_1$ of the first job the algorithm is in the FtP-mode and has an associated speed-profile given by $s(FtP(0),t) = 0$ for all $t\in[0,d]$. Upon each release time $r_i$, $i=1,\dots n$, and while in the FtP-mode, \textsc{CDSwP}$(\lambda)$ does the following:

\begin{itemize}
   \item  If $\eta_i\le \lambda$, \textsc{CDSwP} remains in the FtP-mode and updates the speed profile from $s(FtP(i-1),t)$ to $s(FtP(i),t)$ for $[r_i,d]$ with the help of a job instance $J^i$. Instance $J^i$ consists of:
    \begin{itemize}
    \item One job $i'$ with release time $r_{i'} = r_{i}$, workload $w_{i'}$ equal to the total amount of unfinished at $r_i$ workload that was released at any timepoint $t\le r_i$, and deadline $d$.
    
    \item For each job $j$ not yet released at $r_j$, include job $j$ with a release time of $p_j'$, a deadline of $d$ and a volume of $w_i$ in $J^i$.
\end{itemize}
The new speed-profile $s(FtP(i),t)$ is given for any $t\in[r_i,d]$ by
\begin{align*}
    s(FtP(i),t) := \begin{cases}
    s(YDS(J'),t), & \parbox[t]{0.34\columnwidth}{\text{if} $YDS(J')$ \text{runs job} $i'$ \text{at} $t$,}\\
    0, &\text{otherwise}.
    \end{cases}
\end{align*}

Algorithm \textsc{CDSwP} now runs at $s(FtP(i),t)$ for any $t\in [r_i, r_{i+1})$, and remains in the FtP-mode. 

\item     Otherwise, if $\eta_i>\lambda$ then \textsc{CDSwP} switches to the recovery-mode, and sets $k:= i$. 
   \end{itemize}
   
   When in recovery-mode, the algorithm runs at  speed $s(t) = s(FtP(k-1),t) + s(qOA(k),t)$ at each timepoint $t$ until $d$, where $s(FtP(k-1),t)$ is the last speed-profile generated in the FtP-mode, and  $s(qOA(k),t)$, is the speed that the online algorithm $qOA$ would have at timepoint $t$ when presented (in an online fashion) with (the actual) jobs $k,\dots n$. 
   
    Note that defining the speed at any timepoint $t$ is sufficient in order to fully describe the algorithm. Indeed, since all jobs have a common deadline of $d$, it is irrelevant which job (among the active jobs) is being processed at any timepoint $t$. Nevertheless, to simplify the presentation we will implicitly assume in the following that at timepoint $t$ the currently active and unfinished job with the earliest release time is the one being processed -- and ties are broken arbitrarily. We first prove that the algorithm produces feasible schedules:
    
    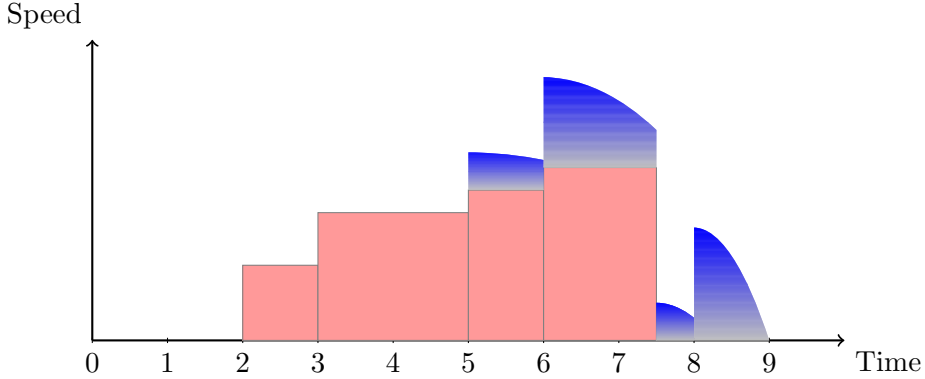
\begin{figure}[t]
        \centering
        \begin{tikzpicture}
        %\scalebox{.65}{
        \draw[thick,->] (0,0) -- (10,0) node[anchor=north west] {Time};
        \draw[thick,->] (0,0) -- (0,4) node[anchor=south east] {Speed};
        \foreach \x in {0,1,2,3,4,5,6,7,8,9}
        \draw (\x cm,1pt) -- (\x cm,-1pt) node[anchor=north] {$\x$};

        \filldraw[fill=red!40!white, draw=gray] (2,0) rectangle (3,1);
        \filldraw[fill=red!40!white, draw=gray] (3,0) rectangle (5,1.7);
        \filldraw[fill=red!40!white, draw=gray] (5,0) rectangle (6,2);
        \filldraw[fill=red!40!white, draw=gray] (6,0) rectangle (7.5,2.3);
        
        \shade[top color=blue,bottom color=gray!50] 
      (5,2.5) parabola (6,2.4) |- (5,2);
      
        \shade[top color=blue,bottom color=gray!50] 
              (6,3.5) parabola (7.5,2.8) |- (6,2.3);
              
        \shade[top color=blue,bottom color=gray!50] 
              (7.5,0.5) parabola (8,0.3) |- (7.5,0);
              
        \shade[top color=blue,bottom color=gray!50] 
              (8,1.5) parabola (9,0) |- (8,0);
              %}
        \end{tikzpicture}
        \caption{A common deadline instance with $\eta>\lambda$. The first time point with $\eta_i > \lambda $ is time $5$ in which we start to run qOA for the rest of the jobs (blue part) while we continue running $YDS$ for the jobs released before $5$ (red part). At time point $7.5$, the workload of the first set of jobs is finished.
}
        \label{fig:fig2}
    \end{figure}
    
\begin{restatable}{observation}{CommonFeasibility}
\label{obs:common-feasibility}
%   \begin{observation}
        Algorithm \textsc{CDSwP} fully processes the whole processing volume of each job $w_j$, within $[r_j,d]$.
 %   \end{observation}
 \end{restatable}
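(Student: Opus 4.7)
The plan is to prove feasibility by analyzing the two modes separately, with the crux being an inductive invariant for the FtP mode. For the FtP mode, I would argue by induction on $i$ that whenever the algorithm executes an FtP update at release $r_i$, the workload $w_{i'}$ in the auxiliary instance $J^i$ equals the total volume from jobs $\{1,\dots,i\}$ that is still unprocessed at time $r_i$, and the speed profile $s(FtP(i),t)$ integrated over $[r_i,d]$ equals exactly $w_{i'}$. The first statement is the definition of $w_{i'}$; the second follows because $YDS(J^i)$ is a feasible schedule for $J^i$ and hence processes $i'$ in full, while by construction $s(FtP(i),t)$ is precisely the part of that schedule devoted to $i'$. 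Between $r_i$ and $r_{i+1}$ the algorithm runs at $s(FtP(i),t)$, so at time $r_{i+1}$ the residual volume from $\{1,\dots,i\}$ equals $\int_{r_{i+1}}^{d} s(FtP(i),t)\,dt$; adding $w_{i+1}$ recovers $w_{(i+1)'}$ and closes the induction. If the algorithm never leaves FtP, then at $r_n$ the instance $J^n$ contains only $n'$, whose entire volume is scheduled by $YDS(J^n)$ to complete by $d$.

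For the recovery mode, suppose the switch takes place at release time $r_k$. By the FtP invariant applied at $r_{k-1}$, the profile $s(FtP(k-1),t)$ has $\int_{r_k}^{d} s(FtP(k-1),t)\,dt$ equal to the residual volume of jobs $\{1,\dots,k-1\}$ still unfinished at $r_k$. On the other hand, $s(qOA(k),t)$ is the speed profile of a bona fide online algorithm applied to the actual jobs $\{k,\dots,n\}$, and is therefore feasible for that subinstance. Running at the sum $s(FtP(k-1),t)+s(qOA(k),t)$ corresponds to processing these two disjoint job sets in parallel; since all jobs share the common deadline $d$ and each of the two components individually completes its assigned volume by $d$, the combined schedule is feasible for the full instance.

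The main subtlety lies in the "resets" of the FtP mode: at every release, the auxiliary instance $J^i$ is constructed from scratch with future jobs pinned at their shrunk predictions $p_j'$, yet we must still guarantee that $w_{i'}$ faithfully tracks the residual past workload and that the freshly computed $YDS(J^i)$ devotes enough time to $i'$ within $[r_i,d]$ to finish this residual. Both facts reduce to the observation that $YDS(J^i)$ is a feasible optimal schedule for $J^i$, and hence processes $i'$ entirely, regardless of how the future predicted release times relate to the actual ones. Once the FtP invariant is in place, feasibility in both modes follows with only a short bookkeeping argument.
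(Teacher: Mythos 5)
Your proof is correct and follows essentially the same route as the paper's: the residual-workload accumulation in the auxiliary jobs $i'$ handles the FtP mode (the paper states this tersely where you make the induction and volume-conservation invariant explicit), and the recovery mode is handled identically by summing the last FtP profile with the qOA profile and invoking feasibility of each component on its disjoint job set under the common deadline.
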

 \begin{proof}
    Note that by the algorithm definition, no job starts being processed before its arrival in any mode. So it suffices to show that the complete processing volume of each job is completed before its deadline.
    Assume first that the algorithm remains in the FtP-mode until $d$.
    By the definition of the job instances $J^i$, any still unfinished processing volume $w_{n'}$ will be assigned to job $n'$ at timepoint $r_n$ and YDS will schedule it within $[r_n,d)$ according to YDS at a speed of $w_{n'}/(d-r_n)$. So the resulting schedule is feasible in that case. If the algorithm switches to the recovery mode at some $r_k$, then by the above argument the speed profile $s(FtP(k-1),t)$ is sufficient to finish  jobs $1,\dots k-1$, and furthermore speed profile $s(qOA(k),t)$ is feasible for for jobs $k,\dots n$, by the feasibility of algorithm $qOA$. So the overall speed profile $s(FtP(k-1),t)+s(qOA(j),t)$ is sufficient for processing the whole volume.
    \end{proof}

We begin by showing the following theorem which will imply consistency and smoothness. 
\begin{lemma}[Consistency \& Smoothness]\label{GoodPrediction}
Under the assumption that $\eta \in (0, \lambda)$, there holds
\begin{align*}
    \mathcal{E}_{CDSwP}\leq \left(\frac{1+\eta}{1-\lambda}\right)^{\alpha-1}  \cdot  \mathcal{E}_{YDS(J_{R})}.
\end{align*}
\end{lemma}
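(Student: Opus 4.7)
My plan is to first reduce to the case where \textsc{CDSwP} stays in FtP-mode for the entire run (which is immediate since $\eta_i \le \eta < \lambda$ for every $i$, so the switching condition is never triggered), and then prove the sharper intermediate inequality $\mathcal{E}_{CDSwP} \le \mathcal{E}_{YDS(J_{P'})}$. The stated bound then follows directly from Corollary~\ref{CD_shrunkPredictionError}.

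To prove the intermediate inequality, I would introduce the schedules $A^i := YDS(J^i)$ and the energies $E^i := \mathcal{E}_{A^i}$ measured on $[r_i,d]$. A first key observation is that during $[r_i,r_{i+1})$ every job $j>i$ in $J^i$ has release time $p'_j \ge r_j \ge r_{i+1}$ (using Observation~\ref{obs:pjrjqjdj} and that jobs arrive in non-decreasing order of release time), so only job $i'$ is released in $J^i$. Consequently $A^i$ either runs $i'$ or idles throughout $[r_i,r_{i+1})$, and by the definition of the FtP-mode speed profile, \textsc{CDSwP}'s speed during $[r_i,r_{i+1})$ coincides with that of $A^i$ on the same interval. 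In particular, the energy of \textsc{CDSwP} on $[r_i,r_{i+1})$ equals the energy of $A^i$ on $[r_i,r_{i+1})$.

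The core step is then the telescoping inequality $E^{i+1} \le \mathcal{E}_{A^i}\text{ on }[r_{i+1},d]$. To prove it I would construct a feasible schedule $\widetilde{A}$ for $J^{i+1}$ that reuses the speed profile of $A^i$ on $[r_{i+1},d]$: at each time $t$ when $A^i$ processes $i'$ or job $i+1$, reassign the processed volume to $(i+1)'$ in $\widetilde A$; when $A^i$ processes some $j > i+1$, keep running $j$. A short bookkeeping argument confirms that the total volume assigned to $(i+1)'$ equals exactly $U_{i+1}$ (the unfinished workload of jobs $1,\dots,i+1$ at $r_{i+1}$), and that all release times are respected (since $p'_j \ge r_{i+1}$ for $j > i$). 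By YDS-optimality $E^{i+1} \le \mathcal{E}_{\widetilde A}$, and summing the telescoped differences over $i=1,\dots,n$ yields $\mathcal{E}_{CDSwP} \le E^1$.

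Finally, $J^1$ differs from $J_{P'}$ only in that job $1$ has the earlier release time $r_1 \le p'_1$; every schedule feasible for $J_{P'}$ is therefore also feasible for $J^1$, giving $E^1 \le \mathcal{E}_{YDS(J_{P'})}$. Combining with Corollary~\ref{CD_shrunkPredictionError} completes the proof. The main obstacle is the construction and bookkeeping of $\widetilde{A}$, in particular verifying that the reassignment exactly matches the workload of $(i+1)'$ without violating any release times; but this becomes essentially automatic once one observes that $A^i$ cannot process any $j > i$ before $p'_j \ge r_{i+1}$, so all volume $A^i$ processes on $i'$ during $[r_i,r_{i+1})$ is precisely the portion of $w_{i'}$ that \textsc{CDSwP} completes before reassignment at $r_{i+1}$.
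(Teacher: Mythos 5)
Your proposal is correct and follows essentially the same route as the paper: the paper likewise reduces to the intermediate bound $\mathcal{E}_{CDSwP} \le \mathcal{E}_{YDS(J_{P'})}$ (its Lemma~\ref{CompareToShrinkingPrediction}) via a telescoping argument over the release times $r_1,\dots,r_n$ that exploits YDS-optimality, and then concludes with Corollary~\ref{CD_shrunkPredictionError}. The only difference is in how the telescoping step is justified: the paper introduces the auxiliary instance ${J^{i}}'$ (job $i$ kept separate with release time $p'_i$) and combines a release-time--relaxation inequality with the suffix-optimality of YDS, whereas you prove the equivalent inequality $E^{i+1}\le\mathcal{E}_{A^i}^{[r_{i+1},d]}$ in one shot by reassigning the suffix of $A^i$ into a feasible schedule for $J^{i+1}$ --- the same idea, packaged slightly more directly.
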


Before proving Lemma~\ref{GoodPrediction} we show the following intermediate result.

\begin{restatable}{lemma}{CompareToShrinkingPrediction}
%\begin{lemma}
\label{CompareToShrinkingPrediction}
Assuming that $\eta \in (0, \lambda)$, there holds
\begin{align*}
    \mathcal{E}_{CDSwP}\leq  \mathcal{E}_{YDS(J_{P'})}
\end{align*}
%\end{lemma}
\end{restatable}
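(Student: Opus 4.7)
The plan is to construct a sequence of hybrid schedules $S^*_0, S^*_1, \ldots, S^*_n$ interpolating between $YDS(J_{P'})$ and \textsc{CDSwP}, and to show that each step weakly decreases the energy, so that a telescoping argument yields the result. Specifically, I would let $S^*_i$ be the schedule that executes \textsc{CDSwP}'s actual speed profile on $[0, r_i)$ and then continues as $YDS(J^i)$ on $[r_i, d]$, with the convention that $J^0$ consists of a dummy consolidated job of zero workload together with jobs $j=1,\ldots,n$ at their shrunk predicted releases $p'_j$. Then $\mathcal{E}_{S^*_0} = \mathcal{E}_{YDS(J_{P'})}$, while $S^*_n$ coincides with \textsc{CDSwP} on $[0, d]$: the instance $J^n$ only contains the consolidated job $n'$, so $YDS(J^n)$'s speed profile on $[r_n, d]$ matches $s(FtP(n), \cdot)$, which is exactly what \textsc{CDSwP} executes. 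It thus suffices to prove $\mathcal{E}_{S^*_i} \le \mathcal{E}_{S^*_{i-1}}$ for each $i \ge 1$.

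The first key step is a structural observation: under $\eta \in (0, \lambda)$, Observation~\ref{obs:pjrjqjdj} yields $r_j \le p'_j$, and since jobs are indexed in order of release time, every job $j \ge i$ in $J^{i-1}$ satisfies $p'_j \ge r_j \ge r_i$. Consequently only the consolidated job $(i-1)'$ is available to $YDS(J^{i-1})$ on $[r_{i-1}, r_i)$, so the full speed of $YDS(J^{i-1})$ on this subinterval coincides with $s(FtP(i-1), \cdot)$, i.e., with \textsc{CDSwP}. In particular $S^*_i$ and $S^*_{i-1}$ agree on all of $[0, r_i)$, which reduces the step inequality to comparing the energy of $YDS(J^i)$ with that of $YDS(J^{i-1})|_{[r_i, d]}$.

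For that per-step inequality I would introduce a virtual instance $J^i_{\mathrm{virt}}$ whose jobs are precisely those that $YDS(J^{i-1})|_{[r_i, d]}$ processes on $[r_i, d]$: the $U_i - w_i$ units of $(i-1)'$ not yet completed by $r_i$ (released at $r_i$), plus jobs $i, i+1, \ldots, n$ at their shrunk predicted releases $p'_j$. By construction $YDS(J^{i-1})|_{[r_i, d]}$ is feasible for $J^i_{\mathrm{virt}}$, so its energy is at least $\mathcal{E}_{YDS(J^i_{\mathrm{virt}})}$. Finally, $J^i_{\mathrm{virt}}$ and $J^i$ differ only in that the $w_i$ units of job $i$ are released at $p'_i \ge r_i$ in $J^i_{\mathrm{virt}}$ but at $r_i$ in $J^i$, so the standard monotonicity of the offline speed-scaling optimum — postponing a release cannot decrease the optimal energy — gives $\mathcal{E}_{YDS(J^i)} \le \mathcal{E}_{YDS(J^i_{\mathrm{virt}})}$. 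Chaining these inequalities closes the induction step. The main obstacle in the argument is establishing the structural fact about $[r_{i-1}, r_i)$, which crucially relies on the hypothesis $\eta < \lambda$ through $r_j \le p'_j$; once that is in place, the remainder is a clean combination of the feasibility of $YDS(J^{i-1})|_{[r_i, d]}$ for $J^i_{\mathrm{virt}}$ together with YDS monotonicity with respect to release times.
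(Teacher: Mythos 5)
Your proposal is correct and is essentially the paper's own proof in a different packaging: your virtual instance $J^i_{\mathrm{virt}}$ is exactly the paper's intermediate instance ${J^{i}}'$, and your hybrid schedules $S^*_i$ correspond precisely to the intermediate expressions in the paper's backward telescoping of $\mathcal{E}_{CDSwP} = \sum_i \mathcal{E}_{CDSwP(J^i)}^{[r_i,r_{i+1})}$, with the same two ingredients (the suffix of $YDS(J^{i-1})$ being feasible/optimal for ${J^{i}}'$, and release-time monotonicity via Observation~\ref{obs:pjrjqjdj}). The only cosmetic difference is that you run the induction forward from $YDS(J_{P'})$ to \textsc{CDSwP}, whereas the paper folds the sum backward from \textsc{CDSwP} to $YDS(J_{P'})$.
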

\begin{proof}
    Consider job instance ${J^{i}}'$ which consists of:
    \begin{itemize}
        \item A job $j^{i-1}$ with release time $r_i$, deadline $d$ and volume $w_{j^{i-1}} := w_i'-w_i$ equal to the total volume of jobs $1,\dots i-1$ that is still unfinished at $r_i$,
        \item Job $i$ with release time at $p_i'$ (and still deadline $d$ and processing volume $w_i$),
        \item For each job $j$ not yet released at $r_j$, include job $j$ with a release time of $p'_j$, a deadline of $d$ and a volume of $w_i$ in ${J^{i}}'$.
    \end{itemize}
    
    Note that, instance ${J^{i}}'$ differs from $J^i$ only in that job $i$ is considered separately, and not together with all previously released jobs that are still not finished. By Observation~\ref{obs:pjrjqjdj} a YDS schedule for the former is a feasible schedule for the later, and therefore by optimality of YDS,
    
    \begin{align}
    \label{eq:main22} 
        \mathcal{E}_{CDSwP(J^i)}^{[r_i, \infty)} \leq  \mathcal{E}_{CDSwP({J^{i}}')}^{[r_i, \infty)},
    \end{align}
    where $\mathcal{E}_{A(J)}^{[a,b]}$, refers to the energy consumption that the schedule produced by algorithm A on instance $J$ has within interval $[a,b]$.

    Using this notation, we can express the total energy-consumption of the $CDSwP$ as

    \begin{align*}
        \mathcal{E}_{CDSwP} &= \sum_{i=1}^n \mathcal{E}_{CDSwP(J^i)}^{[r_i, r_{i+1})}\\ 
        &= \sum_{i=1}^{n-2} \mathcal{E}_{CDSwP(J^i)}^{[r_i, r_{i+1})} +  \mathcal{E}_{CDSwP(J^{n-1})}^{[r_{n-1}, r_n)} +
    \mathcal{E}_{CDSwP(J^{n})}^{[r_{n}, d)}\\
        &\leq \sum_{i=1}^{n-2} \mathcal{E}_{CDSwP(J^i)}^{[r_i, r_{i+1})} +  \mathcal{E}_{CDSwP(J^{n-1})}^{[r_{n-1}, r_n)} +
        \mathcal{E}_{CDSwP({J^{n}}')}^{[r_{n}, d)}\\
        &=  \sum_{i=1}^{n-2} \mathcal{E}_{CDSwP(J^i)}^{[r_i, r_{i+1})} +  \mathcal{E}_{CDSwP(J^{n-1})}^{[r_{n-1},d)}\\
        &\vdots \\
         &\le \mathcal{E}_{CDSwP(J^1)}^{[r_1, d)} \leq \mathcal{E}_{YDS(J_{P'})},
    \end{align*}
where the inequalities follow by applying Equation~(\ref{eq:main22}).
\end{proof}

\begin{proof}[Proof of Lemma~\ref{GoodPrediction}]

By combining Lemmas~\ref{CompareToShrinkingPrediction} and~\ref{CD_shrunkPredictionError} we have,
\begin{align*}
    \mathcal{E}_{CDSwP}\leq   \mathcal{E}_{YDS(J_{P'})} \leq \left(\frac{1+ \eta}{1-\lambda}\right)^{\alpha-1}  \cdot  \mathcal{E}_{YDS(J_{R})},
\end{align*}

and the lemma directly follows. 
\end{proof}

We note that the above proof also works in exactly the same way when only a subset $A$ of the job set is processed.

\begin{corollary}
\label{cor:good}
Consider a set of jobs $A\subseteq \mathcal{J}$ and assume that $\eta_i \in (0, \lambda)$ holds for every job $i\in A$. Then
\begin{align*}
    \mathcal{E}_{CDSwP(A)}\leq  \mathcal{E}_{YDS(J_{P'}(A))}
    \end{align*}
\end{corollary}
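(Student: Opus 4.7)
The plan is to mirror the proof of Lemma~\ref{CompareToShrinkingPrediction} essentially verbatim, with the job set $\mathcal{J}$ replaced throughout by the subset $A$. The crucial observation enabling this is that, since $\eta_i \in (0,\lambda)$ for every $i\in A$, the algorithm (viewed only on jobs of $A$) never triggers a switch into the recovery mode and therefore remains in FtP-mode throughout. Consequently, at each release time $r_i$ with $i\in A$ its speed profile on $[r_i,d]$ is defined by the YDS schedule of an auxiliary instance built from the unfinished work at $r_i$ plus the shrunk predictions of the remaining jobs in $A$.

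Listing the jobs of $A$ in order of release time as $i_1<i_2<\cdots<i_k$, I would define, for each $m\in\{1,\dots,k\}$, the restricted instance $J^{i_m}(A)$ exactly as $J^{i_m}$ in the proof of Lemma~\ref{CompareToShrinkingPrediction}, except that the aggregated unfinished volume at $r_{i_m}$ sums only over jobs $i_1,\dots,i_{m-1}$, and the set of future jobs is restricted to $\{i_{m+1},\dots,i_k\}$, each included with release time $p'_{i_\ell}$ and deadline $d$. Analogously, ${J^{i_m}(A)}'$ would be obtained by splitting job $i_m$ off from the aggregated-work job and assigning it release time $p'_{i_m}$ rather than $r_{i_m}$. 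By Observation~\ref{obs:pjrjqjdj} we have $r_{i_m}\ge p'_{i_m}$, so ${J^{i_m}(A)}'$ is a relaxation of $J^{i_m}(A)$, and optimality of YDS then yields the analog of inequality~(\ref{eq:main22}):
\begin{align*}
\mathcal{E}_{CDSwP(J^{i_m}(A))}^{[r_{i_m},\infty)} \le \mathcal{E}_{CDSwP({J^{i_m}(A)}')}^{[r_{i_m},\infty)}.
\end{align*}

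From here the same telescoping cascade as in Lemma~\ref{CompareToShrinkingPrediction} applies: writing
\begin{align*}
\mathcal{E}_{CDSwP(A)} = \sum_{m=1}^{k}\mathcal{E}_{CDSwP(J^{i_m}(A))}^{[r_{i_m},r_{i_{m+1}})}
\end{align*}
with the convention $r_{i_{k+1}}=d$, one applies the above inequality to the final term, merges it into the second-to-last, and iterates. The cascade collapses to $\mathcal{E}_{CDSwP(J^{i_1}(A))}^{[r_{i_1},d)} \le \mathcal{E}_{YDS(J_{P'}(A))}$, which gives the claim.

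The only nontrivial point I expect to be an obstacle is bookkeeping: one must verify that the aggregated-work job at $r_{i_m}$ together with the reduced future-job set is indeed what CDSwP effectively ``sees'' when operating on the subset $A$, so that the telescoping over $m$ neither double-counts nor drops any processing volume. This is routine once the restricted instances $J^{i_m}(A)$ are fixed precisely as above, because the only inputs used by CDSwP in FtP-mode at $r_{i_m}$ are the unfinished volume from already-released jobs in $A$ and the shrunk predictions of unreleased jobs in $A$; both are captured by $J^{i_m}(A)$ by construction.
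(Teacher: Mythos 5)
Your proposal is correct and is exactly the paper's argument: the paper proves this corollary with the single remark that the proof of Lemma~\ref{CompareToShrinkingPrediction} ``also works in exactly the same way when only a subset $A$ of the job set is processed,'' which is precisely the verbatim-restriction-to-$A$ telescoping argument you spell out. Your additional observations (that no switch to recovery mode can occur since $\eta_i<\lambda$ for all $i\in A$, and that the restricted instances $J^{i_m}(A)$ capture exactly what the algorithm sees) are the bookkeeping the paper leaves implicit.
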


We next analyze the case of inadequate predictions.

\begin{restatable}{lemma}{Robustness}
(Robustness) 
%\begin{lemma}[Robustness]
\label{BadPredictionCommon}
With a parameter $\eta \notin (0, \lambda)$, we have
\begin{align*}
    \mathcal{E}_{CDSwP}\leq 2 ^ {\alpha}  \left(\frac{1+\lambda}{1-\lambda}\right)^{\alpha-1} \cdot  \mathcal{E}_{qOA}.
\end{align*}
%\end{lemma}
\end{restatable}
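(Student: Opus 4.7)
The plan is to decompose $\mathcal{E}_{CDSwP}$ into the energy contribution of the FtP portion of the schedule and the energy contribution of the qOA portion, bound each piece against $\mathcal{E}_{qOA}$ separately, and combine them via the convexity inequality $(a+b)^\alpha \le 2^{\alpha-1}(a^\alpha+b^\alpha)$. Let $k$ be the smallest index with $\eta_k>\lambda$, so CDSwP remains in FtP-mode throughout $[0,r_k)$ and then runs in recovery-mode at the combined speed $s(FtP(k-1),t)+s(qOA(k),t)$ on $[r_k,d]$; set $A=\{1,\dots,k-1\}$ and $B=\{k,\dots,n\}$. Let $\mathcal{E}_{FtP}$ be the total energy of the FtP speed profile over $[0,d]$, i.e., the actual speeds used on $[0,r_k)$ together with the virtual continuation $s(FtP(k-1),\cdot)$ on $[r_k,d]$. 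Applying the convexity inequality pointwise on $[r_k,d]$ and integrating yields
\[
\mathcal{E}_{CDSwP}\;\le\;2^{\alpha-1}\left(\mathcal{E}_{FtP}+\mathcal{E}_{qOA(k)}\right).
\]

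For the FtP contribution, observe that the FtP-extended schedule coincides with the schedule CDSwP would output on the subinstance induced by $A$: every job in $A$ has $\eta_i\le\lambda$, so CDSwP would never switch out of FtP-mode, and no job of $B$ has been released before $r_k$. Hence Corollary~\ref{cor:good} gives $\mathcal{E}_{FtP}\le \mathcal{E}_{YDS(J_{P'}(A))}$, and Corollary~\ref{CD_shrunkPredictionError} applied to the subinstance $A$ (whose max error is at most $\lambda$) further gives $\mathcal{E}_{YDS(J_{P'}(A))}\le \left(\frac{1+\lambda}{1-\lambda}\right)^{\alpha-1}\mathcal{E}_{YDS(J_R(A))}$. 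Since $A\subseteq \mathcal{J}$ and YDS is the offline optimum, $\mathcal{E}_{YDS(J_R(A))}\le\mathcal{E}_{YDS(J_R)}\le\mathcal{E}_{qOA}$, so $\mathcal{E}_{FtP}\le \left(\frac{1+\lambda}{1-\lambda}\right)^{\alpha-1}\mathcal{E}_{qOA}$. Combining with the bound $\mathcal{E}_{qOA(k)}\le \mathcal{E}_{qOA}$ (argued below) and using $\left(\frac{1+\lambda}{1-\lambda}\right)^{\alpha-1}\ge 1$ (which holds since $\lambda\in[0,1/2)$ and $\alpha\ge 1$) gives
\[
\mathcal{E}_{CDSwP}\;\le\;2^{\alpha-1}\left[\left(\frac{1+\lambda}{1-\lambda}\right)^{\alpha-1}+1\right]\mathcal{E}_{qOA}\;\le\;2^{\alpha}\left(\frac{1+\lambda}{1-\lambda}\right)^{\alpha-1}\mathcal{E}_{qOA}.
\]

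The main obstacle is the claim $\mathcal{E}_{qOA(k)}\le \mathcal{E}_{qOA}$, i.e., qOA's energy on the later-jobs subset $B$ is dominated by its energy on the full instance. I would prove this by pointwise domination of the speeds: for $t<r_k$ we have $s(qOA(k),t)=0$ trivially, and for $t\ge r_k$ qOA's speed equals $q$ times YDS's speed on the current residual workload. A coupling argument across release events shows that at every time $t\ge r_k$ the residual workload of the full-instance qOA-run contains at least as much work as the residual workload of the $B$-only run: initially at $r_k$ the gap equals the (non-negative) leftover workload of $A$, each $B$-release adds the same amount to both runs, and between releases the gap evolves at rate $-q(v_{\mathrm{full}}-v_B)\le 0$, where $v_{\mathrm{full}}\ge v_B$ because YDS's speed is monotone in the residual workload (the maximum critical-interval intensity can only grow when more work is added). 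Hence $s(qOA,t)\ge s(qOA(k),t)$ pointwise and the required bound follows by integrating $\alpha$-th powers. A secondary subtlety is the precise correspondence between the FtP-portion of the full run and CDSwP on the $A$-subinstance, which is handled by restricting the telescoping argument underlying Lemma~\ref{CompareToShrinkingPrediction} to the arrivals in $A$.
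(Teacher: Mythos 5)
Your skeleton matches the paper's (split at the first index $k$ with $\eta_k>\lambda$, bound the FtP part and the qOA part separately, combine via the $2^{\alpha-1}$ convexity step, i.e.\ Lemma~\ref{lem:merge} with $m=2$), and your coupling argument for $\mathcal{E}_{qOA(k)}\le\mathcal{E}_{qOA}$ is in fact \emph{more} explicit than the paper, which simply asserts the corresponding inequality $\mathcal{E}_{qOA(J_R)}^B \le \mathcal{E}_{qOA(J_R)}^J$; in the common-deadline setting your monotonicity argument goes through since qOA's speed is $q$ times remaining volume over remaining time. However, the step you defer as a ``secondary subtlety'' is a genuine gap, and it is exactly where the bulk of the paper's proof lives. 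Your claim that the FtP-extended schedule ``coincides with the schedule CDSwP would output on the subinstance induced by $A$'' is false: each profile $s(FtP(i),\cdot)$ is read off from $YDS(J^i)$, and the instance $J^i$ contains \emph{all} jobs not yet released at $r_i$ --- including every job of $B$ --- at their shrunk predicted release times $p_j'$. Since $B$'s predictions may be arbitrarily bad (this is precisely the robustness regime), these phantom jobs can force the FtP profile to compress $A$'s work. Concretely, take $\lambda=0.1$, one job in $A$ with $r_1=0$, $d=10$, $w_1=1$ and a perfect prediction, and one job in $B$ with huge volume $W$, predicted release $1$ (so $p_2'=1$) but actual release $9.5$. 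Then $YDS(J^1)$ reserves $[1,10)$ for the phantom job and runs job $1$ at speed $1$ on $[0,1)$, so $\mathcal{E}_{FtP}=1$, whereas $YDS(J_R(A))$ runs job $1$ at speed $0.1$ with energy $0.1^{\alpha-1}$. Your claimed chain $\mathcal{E}_{FtP}\le\mathcal{E}_{YDS(J_{P'}(A))}\le\left(\frac{1+\lambda}{1-\lambda}\right)^{\alpha-1}\mathcal{E}_{YDS(J_R(A))}$ therefore fails by an unbounded factor. (The lemma itself survives in this example only because $\mathcal{E}_{qOA}$ on the full instance is inflated by the huge job of $B$ --- which is exactly why the comparison must be routed through the full instance, not the subinstance $A$.)

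The repair you sketch does not work either: restricting the telescoping argument of Lemma~\ref{CompareToShrinkingPrediction} to the arrivals in $A$ yields $\mathcal{E}_{CDSwP}^A\le\mathcal{E}_{YDS(J_{P'})}^A$, i.e.\ a bound by the energy that YDS spends on $A$'s jobs when run on the \emph{full} shrunk-predicted instance (with $B$'s phantom jobs present) --- this is how the paper uses Corollary~\ref{cor:good} --- and not by the subinstance quantity $\mathcal{E}_{YDS(J_{P'}(A))}$ that your chain needs; the two can differ arbitrarily, as the example shows. Relating $\mathcal{E}_{YDS(J_{P'})}^A$ to $\mathcal{E}_{YDS(J_R)}$ is the nontrivial content of the paper's proof: it keeps the $B$ jobs inside every comparison instance, first un-shrinking and then correcting only $A$'s release times via two applications of Lemma~\ref{lem:ShrinkfromOneSide} (this is where the factor $\left(\frac{1+\lambda}{1-\lambda}\right)^{\alpha-1}$ arises), and then handling $B$'s release times with two monotonicity arguments: moving future releases earlier only increases the energy spent on the prefix (Step 2), and an instance whose release times are pointwise no later than those of $J_R$ has a cheaper optimum, since $YDS(J_R)$'s schedule remains feasible for it (Step 3). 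Without this machinery --- or a substitute for it --- your bound on the FtP contribution does not hold.
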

\begin{proof}
As in the definition of \textsc{CDSwP}, let $k$ be the smallest index, such that $\eta_k>\lambda$. Hence, the algorithm switches to the recovery mode at $r_k$.
We partition the job set into two subsets $A= \{1, \cdots, {k-1}\}$ and $B= \{k, \cdots, n\}$.
By Lemma~\ref{lem:merge}, and by the fact that by Corollary~\ref{cor:good} the energy consumption for set B is at most the energy consumption of $qOA$ for the whole job instance, it suffices to upper bound the energy consumption required for set A by the total energy that $qOA(k)$ uses.

We transform the schedule obtained by \textsc{CDSwP} for job set $A$ through three intermediate steps to the schedule produced by $YDS^J(R)$. Since $\mathcal{E}_{YDS^J(R)}\le \mathcal{E}_{qOA^J(R)}$ this will imply the theorem. %\todo[inline]{there is also the $\lambda$ part}

\paragraph{Step 1:} 
Let $J^A$ be the job instance that contains all jobs in $A$, along with jobs $j=k,k+1,\dots n$ with respective release time $p_j'$, deadline $d$ and processing volume $w_j$.

Let $\mathcal{E}_{CDSwP}^A$, and $\mathcal{E}_{YDS(J^{A})}^A$ be the energy consumptions incured while scheduling the subset of jobs $A$ for  $\textsc{CDSwP(J)}$ and $YDS(J^{A})$ respectively.

By Corollary~\ref{cor:good}, 

\begin{align*}
    \mathcal{E}_{CDSwP}^A\leq \mathcal{E}_{YDS(J_{P'})}^A.
\end{align*}

Let $J^A_P$ be the job instance, consisting of the predicted release times ($p_i$) of jobs in set $A$ and the "shrunk" predicted release times ($p'_i$) for the remaining jobs. Note that $J^A_P$ differs from $J^A$ only in the release-times of jobs in set $A$. Since $\eta_i \le \lambda$ for any $i\in A$, there holds for any such $i$ that $d-p_i' = 1/(1-\lambda) (d-p_i)$. By Lemma~\ref{lem:ShrinkfromOneSide}, there therefore holds

\begin{align*}
    \mathcal{E}_{YDS(J_{P'})}^A\leq \left(\frac{1}{1-\lambda}\right)^{\alpha-1}  \cdot  \mathcal{E}_{YDS(J^{A}_P)}^A.
\end{align*}

Consider set $P^* = \{p^*_1,\dots p^*_{k-1}, p'_k, \dots, p'_n\}$ with $p^*_i = p_i - \eta_i (q_i - p_i)$ for all $j \in[k-1]$.
    
There holds
\begin{align}
    \label{eq:PredictReal}
    \mathcal{E}_{YDS(J^A_{P})} \le ( 1 + \lambda)^{\alpha -1} \mathcal{E}_{YDS(J_{P^*})} \le ( 1 + \lambda)^{\alpha -1} \mathcal{E}_{YDS(J^A)}.
\end{align}
    
    By having $c = \frac{1}{(1 + \lambda)}$, and $J= J_{P^*}$ $(r_i = p^*_i)$ in Lemma~\ref{lem:ShrinkfromOneSide}, we obtain $J' = J_{P}$. Since we have $\eta_j < \lambda$ for all $j <k$, the first inequality in~(\ref{eq:PredictReal}) holds.
    For every job $i\in A$ there holds $(r_i,d) \subseteq (p^*_i,d)$. 
    More specifically, a feasible schedule for $J^A$ is feasible for $J_{P^*}$ as well. The second inequality in~(\ref{eq:PredictReal}) then directly follows by the optimality of $YDS$.

Putting things together we therefore have

\begin{align}
    \label{eq:GoodpredictionPart}
    \mathcal{E}_{CDSwP}^A\leq \left(\frac{1+\lambda}{1-\lambda}\right)^{\alpha-1}  \cdot  \mathcal{E}_{YDS(J^{A})}^A,
\end{align}
\paragraph{Step 2:} 
In this step, we want to compare $\mathcal{E}_{YDS(J^{A})}^A$ with the energy of $YDS$ algorithm for a new job instance in which we consider the real release times for some jobs in set $B$ that their shrinking predictions are after their real release times.

A  job instance $J^l$ is defined, consisting of the real release times of jobs in set $A$, the real release times of job $j$ in set $B$ for which $r_j \leq p'_j$, and the shrunk prediction ($p'_j$) for the rest. Since moving the release times of the future jobs to the left could increase the speed (and hence increases energy) in the first part, 

\begin{align*}
   \mathcal{E}_{YDS(J^{A})}^A \leq \mathcal{E}_{YDS(J^{l})}^A.
\end{align*}

\paragraph{Step 3:}
In the last step, we want to compare $\mathcal{E}_{YDS(J^{l})}^A$ with the optimum offline algorithm ($YDS$) for the complete job instance $J$ and their real release time $J_R$. We want to show

\begin{align*}
    \mathcal{E}_{YDS(J^{l})}^A \leq \mathcal{E}_{YDS(J^{l})}^J \leq \mathcal{E}_{YDS(J_R)}^J.
\end{align*}

The first inequality holds because $A \subseteq J$. 
Consider the difference between two job instances $J_R$ and $J^l$.  Since for each job $i$, its available time in $J_R$ is a subset of its available time in $J^l$, $YDS(J_R)$ is a feasible algorithm for job instance $J^l$. Therefore, the second inequality holds.

So far we proved that

\begin{align*}
    \mathcal{E}_{CDSwP}^A\leq \left(\frac{1+\lambda}{1-\lambda}\right)^{\alpha-1}  \cdot  \mathcal{E}_{YDS(J_R)}^J.
\end{align*}

Since we run $qOA$ for the job set $B$, 

\begin{align*}
    \mathcal{E}_{CDSwP}^B =  \mathcal{E}_{qOA(J_R)}^B \leq \mathcal{E}_{qOA(J_R)}^J.
\end{align*}

And by Lemma \ref{lem:merge},  
    \begin{align*}
        \mathcal{E}_{CDSwP}\leq 2 ^ {\alpha-1} \cdot (\left(\frac{1+\lambda}{1-\lambda}\right)^{\alpha-1} \cdot \mathcal{E}_{YDS(J_R)}^J + \mathcal{E}_{qOA(J_R)}^J).
    \end{align*}
    
Since  $\mathcal{E}_{YDS(J_R)}^J \leq \mathcal{E}_{qOA(J_R)}^J$, 

     \begin{align*}
     \begin{split}
         \mathcal{E}_{CDSwP} \leq 
         2 ^ {\alpha-1} \cdot ( \mathcal{E}_{qOA}) (\left(\frac{1+\lambda}{1-\lambda}\right)^{\alpha-1} +1) \\ 
         \leq 
         2 ^ {\alpha}  \left(\frac{1+\lambda}{1-\lambda}\right)^{\alpha-1} \cdot ( \mathcal{E}_{qOA}).
    \end{split}
    \end{align*}
    
\end{proof}

Lemmas~\ref{GoodPrediction} and~\ref{BadPredictionCommon} together imply Theorem~\ref{thm:common-main}.

\section{Discussion on Confidence Parameters $\lambda$ and $\mu$}
\label{sec:experiments}
In order to give some intuition on how the confidence parameters $\mu$ and $\lambda$ affect the obtained performance guarantees of \textsc{SwP}, we perform some numerical experiments for different settings. Moreover, we compare our algorithm with the currently best-known online algorithm qOA and the optimum offline algorithm YDS using real-world data. All experiments were run on a typical laptop computer.

We only consider $\alpha =3$ for the experiments, as this is the typical value of $\alpha$ for real-world processors, see for example~\cite{CritchlowDS00,WiermanAT12}. Furthermore for qOA, we only consider $q= 2 - \frac{1}{\alpha}\approx 1.667$ since this is the value that minimizes the competitive ratio~\cite{BansalCKP12}.

The input data for our experiments is the same as in~\cite{DBLP:journals/corr/AbousamraBP13}. There, jobs are generated from http requests received on EPAs web-server. For practical reasons, we limit our input instances to the first 1000 jobs of their sample. 
In order to generate predictions for the input, we use a normal distribution with a mean of $0$, and a standard deviation of ${0.01, 0.05}$, or $0.1$. For each job, two samples from this distribution are taken and each of them is scaled by the real interval length of the job. The result is then added to each job's actual release time and deadline to obtain predictions for them. 

In order to illustrate the effect of parameters $\lambda$ and $\mu$, we run \textsc{SwP} for different combinations of these values. In particular we consider $\lambda = {0, 0.1, 0.2, 0.3}$ and 
$\mu = 0.1, 0.2, \cdots, 1$. 
Our results with standard deviations $0.01$, $0.05$, $0.1$ can be found in Figure~\ref{fig:experiment-0.01}, Figure~\ref{fig:experiment-main}, and Figure~\ref{fig:experiment-0.1} respectively.

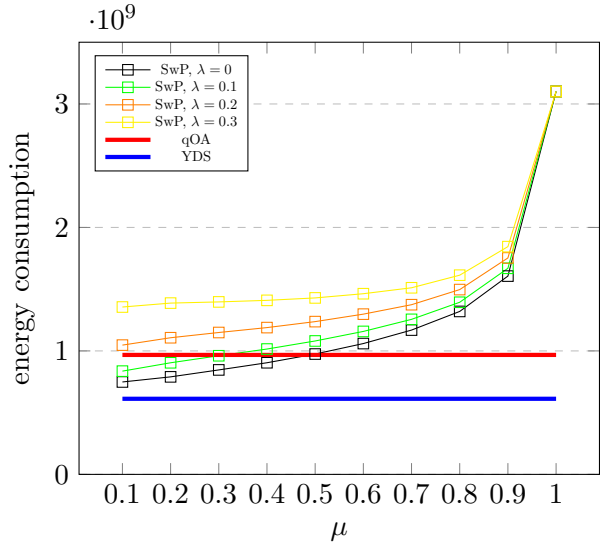
\begin{figure}
    \centering
\begin{tikzpicture}
\pgfplotsset{width=8.5cm,compat=1.9}
\begin{axis}[
    xlabel={$\mu$},
    ylabel={energy consumption},
    ymin=0, ymax=3500000000,
    xtick={0, 0.1, 0.2, 0.3, 0.4, 0.5, 0.6, 0.7, 0.8, 0.9, 1},
    legend pos=north west,
    ymajorgrids=true,
    grid style=dashed,
    legend style={nodes={scale=0.5, transform shape}}, 
]

\addplot[
    color=black,
    mark=square,
    ]
    coordinates {
        (0.1, 748456610.061912)(0.2, 789821792.877815)(0.30000000000000004, 846029193.083950)(0.4, 904149412.586250)(0.5, 974653003.102324)(0.6000000000000001, 1059555054.642507)(0.7000000000000001, 1168756012.879264)(0.8, 1319746942.895084)(0.9, 1606029739.372549)(1.0, 3101125903.818738)
    };
    \addlegendentry{SwP, $\lambda = 0$}

\addplot[
    color=green,
    mark=square,
    ]
    coordinates {
        (0.1, 836258854.890122)(0.2, 903891436.025233)(0.30000000000000004, 960839752.770661)(0.4, 1015187056.160013)(0.5, 1080502562.130370)(0.6000000000000001, 1157836000.557006)(0.7000000000000001, 1255622023.960963)(0.8, 1394009311.352384)(0.9, 1669108392.740638)(1.0, 3101128780.882705)
    };
    \addlegendentry{SwP, $\lambda = 0.1$}
    
\addplot[
    color=orange,
    mark=square,
    ]
    coordinates {
        (0.1, 1047011729.028555)(0.2, 1106700412.498293)(0.30000000000000004, 1149082727.886969)(0.4, 1188873790.459933)(0.5, 1237261068.054246)(0.6000000000000001, 1297834721.482126)(0.7000000000000001, 1374221757.143830)(0.8, 1496677478.356415)(0.9, 1752788422.771746)(1.0, 3101140689.952669)
    };
    \addlegendentry{SwP, $\lambda = 0.2$}
    
\addplot[
    color=yellow,
    mark=square,
    ]
    coordinates {
        (0.1, 1355470178.454708)(0.2, 1387282708.304186)(0.30000000000000004, 1396834898.223269)(0.4, 1409522879.424575)(0.5, 1429107915.423332)(0.6000000000000001, 1463256239.607983)(0.7000000000000001, 1511349570.978009)(0.8, 1613391898.748042)(0.9, 1844462620.203460)(1.0, 3101125184.987737)
    };
    \addlegendentry{SwP, $\lambda = 0.3$}
    
\addplot[
    color=red,
    no marks,
    ultra thick,
    ]
    coordinates {
    (0.1, 967518127.858653)(0.2, 967518127.858653)(0.3, 967518127.858653)(0.4, 967518127.858653)(0.5, 967518127.858653)(0.6, 967518127.858653)(0.7, 967518127.858653)(0.8, 967518127.858653)(0.9, 967518127.858653)(1.0, 967518127.858653)
    };
    \addlegendentry{qOA}
    
\addplot[
    color=blue,
    no marks,
    ultra thick,
    ]
    coordinates {
    (0.1, 612314285.007689)(0.2, 612314285.007689)(0.3, 612314285.007689)(0.4, 612314285.007689)(0.5, 612314285.007689)(0.6, 612314285.007689)(0.7, 612314285.007689)(0.8, 612314285.007689)(0.9, 612314285.007689)(1.0, 612314285.007689)
    };
    \addlegendentry{YDS}
    
\end{axis}
\end{tikzpicture}
    \caption{Prediction set 1, stddev=0.01}
    \label{fig:experiment-0.01}
\end{figure}

%%%%%%%%%%%%%%%%%%%%%%%%%%%%%%%%%%%%%%%%%%%%%%
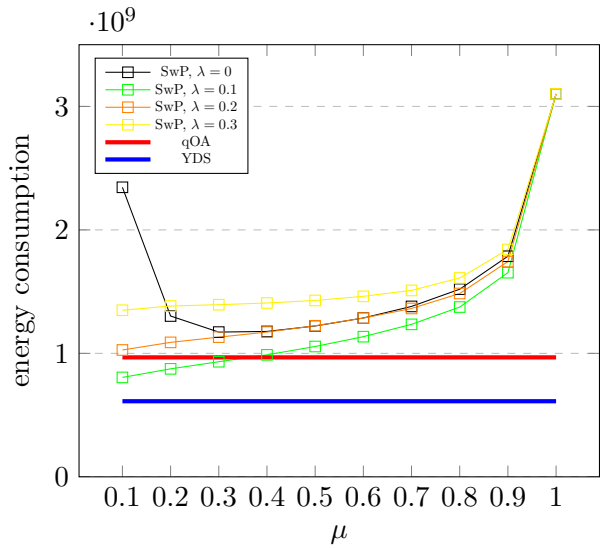
\begin{figure}
    \centering
   \begin{tikzpicture}
\pgfplotsset{width=8.5cm,compat=1.9}
\begin{axis}[
    xlabel={$\mu$},
    ylabel={energy consumption},
    ymin=0, ymax=3500000000,
    xtick={0, 0.1, 0.2, 0.3, 0.4, 0.5, 0.6, 0.7, 0.8, 0.9, 1},
    legend pos=north west,
    ymajorgrids=true,
    grid style=dashed,
    legend style={nodes={scale=0.5, transform shape}},
]

\addplot[
    color=black,
    mark=square,
    ]
    coordinates {
        (0.1, 2345047856.374099)(0.2, 1302424319.153903)(0.30000000000000004, 1172341591.728934)(0.4, 1177627676.361933)(0.5, 1221765014.784079)(0.6000000000000001, 1286503484.810952)(0.7000000000000001, 1380200372.010677)(0.8, 1519624397.298565)(0.9, 1786558196.564168)(1.0, 3101125394.868737)
    };
    \addlegendentry{SwP, $\lambda = 0$}

\addplot[
    color=green,
    mark=square,
    ]
    coordinates {
        (0.1, 805317522.085589)(0.2, 874654869.263613)(0.30000000000000004, 932348502.615183)(0.4, 988034797.892609)(0.5, 1055648590.575977)(0.6000000000000001, 1135146811.699021)(0.7000000000000001, 1234963776.025168)(0.8, 1375206598.353594)(0.9, 1653488046.298736)(1.0, 3101131708.213713)
    };
    \addlegendentry{SwP, $\lambda = 0.1$}
    
\addplot[
    color=orange,
    mark=square,
    ]
    coordinates {
        (0.1, 1027449262.752904)(0.2, 1089058964.744987)(0.30000000000000004, 1131490732.596269)(0.4, 1172364152.656896)(0.5, 1223742333.209896)(0.6000000000000001, 1286054494.228102)(0.7000000000000001, 1363517716.252757)(0.8, 1484866633.151484)(0.9, 1743627246.571100)(1.0, 3101142940.139664)
    };
    \addlegendentry{SwP, $\lambda = 0.2$}
    
\addplot[
    color=yellow,
    mark=square,
    ]
    coordinates {
        (0.1, 1349163337.998995)(0.2, 1383624722.467036)(0.30000000000000004, 1394132377.344509)(0.4, 1408122111.487158)(0.5, 1428498563.162338)(0.6000000000000001, 1461913073.655294)(0.7000000000000001, 1509877499.393803)(0.8, 1610545649.808234)(0.9, 1841603899.205809)(1.0, 3101127699.153709)
    };
    \addlegendentry{SwP, $\lambda = 0.3$}
    
\addplot[
    color=red,
    no marks,
    ultra thick,
    ]
    coordinates {
    (0.1, 967518127.858653)(0.2, 967518127.858653)(0.3, 967518127.858653)(0.4, 967518127.858653)(0.5, 967518127.858653)(0.6, 967518127.858653)(0.7, 967518127.858653)(0.8, 967518127.858653)(0.9, 967518127.858653)(1.0, 967518127.858653)
    };
    \addlegendentry{qOA}
    
\addplot[
    color=blue,
    no marks,
    ultra thick,
    ]
    coordinates {
    (0.1, 612314285.007689)(0.2, 612314285.007689)(0.3, 612314285.007689)(0.4, 612314285.007689)(0.5, 612314285.007689)(0.6, 612314285.007689)(0.7, 612314285.007689)(0.8, 612314285.007689)(0.9, 612314285.007689)(1.0, 612314285.007689)
    };
    \addlegendentry{YDS}
    
\end{axis}
\end{tikzpicture}
    \caption{Prediction set 2, stddev=$0.05$.}
     \label{fig:experiment-main}
\end{figure}

%%%%%%%%%%%%%%%%%%%%%%%%%%%%%%%%%%%%%%%%%%%%%
\begin{figure}
    \centering
\begin{tikzpicture}
\pgfplotsset{width=8.5cm,compat=1.9}
\begin{axis}[
    xlabel={$\mu$},
    ylabel={energy consumption},
    ymin=0, ymax=3500000000,
    xtick={0, 0.1, 0.2, 0.3, 0.4, 0.5, 0.6, 0.7, 0.8, 0.9, 1},
    legend pos=north west,
    ymajorgrids=true,
    grid style=dashed,
    legend style={nodes={scale=0.5, transform shape}},
]

\addplot[
    color=black,
    mark=square,
    ]
    coordinates {
        (0.1, 11119169852.588633)(0.2, 3629321670.747347)(0.30000000000000004, 2261840172.913717)(0.4, 1854525358.878197)(0.5, 1729371524.119441)(0.6000000000000001, 1717177348.892627)(0.7000000000000001, 1751324635.929418)(0.8, 1842245801.109186)(0.9, 2050140413.864959)(1.0, 3101125184.987737)
    };
    \addlegendentry{SwP, $\lambda = 0$}

\addplot[
    color=green,
    mark=square,
    ]
    coordinates {
        (0.1, 1059337962.833822)(0.2, 976548487.583714)(0.30000000000000004, 1016676750.464443)(0.4, 1065393011.979906)(0.5, 1122478925.305598)(0.6000000000000001, 1194772783.318888)(0.7000000000000001, 1290947931.190356)(0.8, 1434170751.525954)(0.9, 1703677598.912743)(1.0, 3101136078.389686)
    };
    \addlegendentry{SwP, $\lambda = 0.1$}
    
\addplot[
    color=orange,
    mark=square,
    ]
    coordinates {
        (0.1, 958659612.241536)(0.2, 1027928593.055258)(0.30000000000000004, 1082410598.291959)(0.4, 1131365106.115680)(0.5, 1184008106.089633)(0.6000000000000001, 1251749102.497711)(0.7000000000000001, 1339260849.393518)(0.8, 1473558285.954850)(0.9, 1737944712.729144)(1.0, 3101125903.818738)
    };
    \addlegendentry{SwP, $\lambda = 0.2$}
    
\addplot[
    color=yellow,
    mark=square,
    ]
    coordinates {
        (0.1, 1273578479.620560)(0.2, 1322178707.810232)(0.30000000000000004, 1342100448.506997)(0.4, 1364924742.185967)(0.5, 1391524858.245256)(0.6000000000000001, 1432116551.018006)(0.7000000000000001, 1491966331.702830)(0.8, 1599678008.800989)(0.9, 1837872497.424757)(1.0, 3101137878.949682)
    };
    \addlegendentry{SwP, $\lambda = 0.3$}
    
\addplot[
    color=red,
    no marks,
    ultra thick,
    ]
    coordinates {
    (0.1, 967518127.858653)(0.2, 967518127.858653)(0.3, 967518127.858653)(0.4, 967518127.858653)(0.5, 967518127.858653)(0.6, 967518127.858653)(0.7, 967518127.858653)(0.8, 967518127.858653)(0.9, 967518127.858653)(1.0, 967518127.858653)
    };
    \addlegendentry{qOA}
    
\addplot[
    color=blue,
    no marks,
    ultra thick,
    ]
    coordinates {
    (0.1, 612314285.007689)(0.2, 612314285.007689)(0.3, 612314285.007689)(0.4, 612314285.007689)(0.5, 612314285.007689)(0.6, 612314285.007689)(0.7, 612314285.007689)(0.8, 612314285.007689)(0.9, 612314285.007689)(1.0, 612314285.007689)
    };
    \addlegendentry{YDS}
    
\end{axis}
\end{tikzpicture}
    \caption{Prediction set 3, stddev=0.1}
    \label{fig:experiment-0.1}
\end{figure}
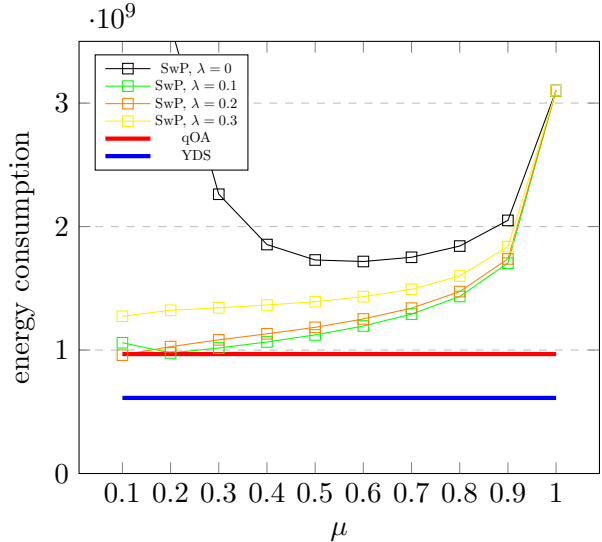

To gain some intuition on the results, recall that $\mu$ denotes the portion of each block for which AVR is run. In particular, for $\mu=1$ the SwP algorithm becomes identical to the AVR algorithm and disregards the predictions, whereas the smaller $\mu$'s value the more the predictions are trusted. This explains why the competitive ratio increases with $\mu$. Similarly, recall that $\lambda$ defines how much the predicted interval will be shrunk and that the improved competitive ratio is only proven for $\eta \le \lambda$ but on the other hand the bigger $\lambda$ gets the smaller that improvement in the competitive ratio will be. Although the best choices for $\lambda$ and $\mu$ depend on the quality and/or structure of the predictions, our experiments highlight that for appropriate such choices, one can significantly improve upon the energy-consumption of qOA. To summarize, in practice the most sensible settings of $\lambda$ and $\mu$ will depend on the quality as well as structure of the predictions and it may be worthwhile experimenting with different such settings.

\section{Conclusion}
In this paper, we have presented a consistent, smooth, and robust algorithm for the general classical, deadline-based, online speed-scaling problem using ML predictions for release times and deadlines.

We can remove the assumption of knowing the number of jobs $n$, by slightly adapting the error definition, so that the prediction is considered to be inadequate if the predicted number of jobs is wrong.

It remains an interesting open question on whether a similar robust, consistent and smooth algorithm exists for the more general setup in which the workloads of the jobs are not known in advance but predicted along with their release times and deadlines. Although we were able to extend \textsc{SwP} under the assumption that it satisfies a natural monotonicity property, it is unclear if that property holds in general.

\bibliography{references}

\begin{thebibliography}{10}

\bibitem{DBLP:journals/corr/AbousamraBP13}
A.~Abousamra, D.~P. Bunde, and K.~Pruhs.
\newblock An experimental comparison of speed scaling algorithms with deadline
  feasibility constraints.
\newblock {\em CoRR}, abs/1307.0531, 2013.

\bibitem{Albers10}
S.~Albers.
\newblock Energy-efficient algorithms.
\newblock {\em Commun. {ACM}}, 53(5):86--96, 2010.

\bibitem{AlbersAG15}
S.~Albers, A.~Antoniadis, and G.~Greiner.
\newblock On multi-processor speed scaling with migration.
\newblock {\em J. Comput. Syst. Sci.}, 81(7):1194--1209, 2015.

\bibitem{AlbersF07}
S.~Albers and H.~Fujiwara.
\newblock Energy-efficient algorithms for flow time minimization.
\newblock {\em {ACM} Trans. Algorithms}, 3(4):49, 2007.

\bibitem{AngelBKL19}
E.~Angel, E.~Bampis, F.~Kacem, and D.~Letsios.
\newblock Speed scaling on parallel processors with migration.
\newblock {\em J. Comb. Optim.}, 37(4):1266--1282, 2019.

\bibitem{antoniadis2020online}
A.~Antoniadis, C.~Coester, M.~Elias, A.~Polak, and B.~Simon.
\newblock Online metric algorithms with untrusted predictions.
\newblock In {\em International Conference on Machine Learning}, pages
  345--355. PMLR, 2020.

\bibitem{AntoniadisGKK20}
A.~Antoniadis, T.~Gouleakis, P.~Kleer, and P.~Kolev.
\newblock Secretary and online matching problems with machine learned advice.
\newblock In {\em NeurIPS}, 2020.

\bibitem{BamasMRS20}
{\'{E}}.~Bamas, A.~Maggiori, L.~Rohwedder, and O.~Svensson.
\newblock Learning augmented energy minimization via speed scaling.
\newblock In {\em NeurIPS}, 2020.

\bibitem{BamasMS20}
{\'{E}}.~Bamas, A.~Maggiori, and O.~Svensson.
\newblock The primal-dual method for learning augmented algorithms.
\newblock In {\em NeurIPS}, 2020.

\bibitem{Bansal-AVR}
N.~Bansal, D.~P. Bunde, H.~Chan, and K.~Pruhs.
\newblock Average rate speed scaling.
\newblock {\em Algorithmica}, 60(4):877--889, 2011.

\bibitem{BansalCKP12}
N.~Bansal, H.~Chan, D.~Katz, and K.~Pruhs.
\newblock Improved bounds for speed scaling in devices obeying the cube-root
  rule.
\newblock {\em Theory Comput.}, 8(1):209--229, 2012.

\bibitem{BKP}
N.~Bansal, T.~Kimbrel, and K.~Pruhs.
\newblock Speed scaling to manage energy and temperature.
\newblock {\em J. ACM}, 54(1), Mar. 2007.

\bibitem{BansalPS09}
N.~Bansal, K.~Pruhs, and C.~Stein.
\newblock Speed scaling for weighted flow time.
\newblock {\em {SIAM} J. Comput.}, 39(4):1294--1308, 2009.

\bibitem{blum2000line}
A.~Blum and C.~Burch.
\newblock On-line learning and the metrical task system problem.
\newblock {\em Machine Learning}, 39(1):35--58, 2000.

\bibitem{CritchlowDS00}
D.~L. Critchlow, R.~H. Dennard, and S.~Schuster.
\newblock Design and characteristics of n-channel insulated-gate field-effect
  transistors.
\newblock {\em {IBM} J. Res. Dev.}, 44(1):70--83, 2000.

\bibitem{DuttingLLV21}
P.~D{\"{u}}tting, S.~Lattanzi, R.~P. Leme, and S.~Vassilvitskii.
\newblock Secretaries with advice.
\newblock In {\em {EC}}, pages 409--429. {ACM}, 2021.

\bibitem{FIAT1994410}
A.~Fiat, Y.~Rabani, and Y.~Ravid.
\newblock Competitive k-server algorithms.
\newblock {\em Journal of Computer and System Sciences}, 48(3):410--428, 1994.

\bibitem{pmlr-v97-gollapudi19a}
S.~Gollapudi and D.~Panigrahi.
\newblock Online algorithms for rent-or-buy with expert advice.
\newblock In K.~Chaudhuri and R.~Salakhutdinov, editors, {\em Proceedings of
  the 36th International Conference on Machine Learning}, volume~97 of {\em
  Proceedings of Machine Learning Research}, pages 2319--2327. PMLR, 09--15 Jun
  2019.

\bibitem{herbster1998tracking}
M.~Herbster and M.~K. Warmuth.
\newblock Tracking the best expert.
\newblock {\em Machine learning}, 32(2):151--178, 1998.

\bibitem{IraniP05}
S.~Irani and K.~Pruhs.
\newblock Algorithmic problems in power management.
\newblock {\em {SIGACT} News}, 36(2):63--76, 2005.

\bibitem{nature2018}
N.~Jones.
\newblock {How to stop data centers from gobbling up the world's electricity}.
\newblock \url{https://www.nature.com/articles/d41586-018-06610-y}, 2018.
\newblock [Online; accessed 02-August-2021].

\bibitem{LITTLESTONE1994212}
N.~Littlestone and M.~Warmuth.
\newblock The weighted majority algorithm.
\newblock {\em Information and Computation}, 108(2):212--261, 1994.

\bibitem{lykouris2018competitive}
T.~Lykouris and S.~Vassilvtiskii.
\newblock Competitive caching with machine learned advice.
\newblock In {\em International Conference on Machine Learning}, pages
  3296--3305. PMLR, 2018.

\bibitem{Mitzenmacher20}
M.~Mitzenmacher.
\newblock Scheduling with predictions and the price of misprediction.
\newblock In {\em {ITCS}}, volume 151 of {\em LIPIcs}, pages 14:1--14:18.
  Schloss Dagstuhl - Leibniz-Zentrum f{\"{u}}r Informatik, 2020.

\bibitem{MitzenmacherV20}
M.~Mitzenmacher and S.~Vassilvitskii.
\newblock Algorithms with predictions.
\newblock In {\em Beyond the Worst-Case Analysis of Algorithms}, pages
  646--662. Cambridge University Press, 2020.

\bibitem{48659}
B.~Moseley, S.~Vassilvitskii, S.~Lattanzi, and T.~Lavastida.
\newblock Online scheduling via learned weights.
\newblock In {\em SODA 2020}, 2020.

\bibitem{NEURIPS2018_73a427ba}
M.~Purohit, Z.~Svitkina, and R.~Kumar.
\newblock Improving online algorithms via ml predictions.
\newblock In S.~Bengio, H.~Wallach, H.~Larochelle, K.~Grauman, N.~Cesa-Bianchi,
  and R.~Garnett, editors, {\em Advances in Neural Information Processing
  Systems}, volume~31. Curran Associates, Inc., 2018.

\bibitem{rohatgi2020near}
D.~Rohatgi.
\newblock Near-optimal bounds for online caching with machine learned advice.
\newblock In {\em Proceedings of the Fourteenth Annual ACM-SIAM Symposium on
  Discrete Algorithms}, pages 1834--1845. SIAM, 2020.

\bibitem{WangL20}
S.~Wang and J.~Li.
\newblock Online algorithms for multi-shop ski rental with machine learned
  predictions.
\newblock In {\em {AAMAS}}, pages 2035--2037. International Foundation for
  Autonomous Agents and Multiagent Systems, 2020.

\bibitem{Wei20}
A.~Wei.
\newblock Better and simpler learning-augmented online caching.
\newblock In {\em {APPROX/RANDOM}}, volume 176 of {\em LIPIcs}, pages
  60:1--60:17. Schloss Dagstuhl - Leibniz-Zentrum f{\"{u}}r Informatik, 2020.

\bibitem{WiermanAT12}
A.~Wierman, L.~L.~H. Andrew, and A.~Tang.
\newblock Power-aware speed scaling in processor sharing systems: Optimality
  and robustness.
\newblock {\em Perform. Evaluation}, 69(12):601--622, 2012.

\bibitem{YDS}
F.~F. Yao, A.~J. Demers, and S.~Shenker.
\newblock A scheduling model for reduced {CPU} energy.
\newblock In {\em {FOCS}}, pages 374--382. {IEEE} Computer Society, 1995.

\end{thebibliography}
\bibliographystyle{abbrv}
\pagebreak

\appendix
%\section{Missing Proofs from Section~\ref{sec:prelim}}
\section{Energy of the Shrunk Instances}

\ShrinkfromOneSide*
\begin{proof}
    In order to prove the theorem, we start with schedule $C_{YDS}(\mathcal{J})$, in which each job $j_i$ runs at a speed $s_i$. It suffices to show that there exists a feasible to schedule $C'(\mathcal{\hat{J}})$ for the jobs of $\hat{J}$ in which each job $\hat{j}_i$ runs at speed of $\hat{s}_i = \frac{1}{c}\cdot s_i$. The theorem then directly follows by the definitions of the energy and the power function.
    
    It is without loss of generality to assume that both $C_{YDS}(\mathcal{J})$ and $C'(\mathcal{\hat{J}})$ are earliest release time first schedules.% \todo[]{AA: Perhaps this needs a bit more explanation.} 
    Let $a_i$ be the timepoint at which $j_i$ starts being processed in $C_{YDS}(\mathcal{J})$, and let $\hat{a}_i = d - c(d-a_i) = d(1-c) + ca_i$. Note that by construction, feasibility of $C_{YDS}(\mathcal{J})$ and because $c<1$, we have that $a_i \ge r_i$ and therefore $\hat{a}_i = d(1-c) + ca_i \ge d(1-c) + cr_i = \hat{r}_i$. Consider earliest release time first schedule $C(\mathcal{\hat{J}})$ in which every job $j_i$ is processed at a speed of  $\hat{s}_i$. 
    
    In the remainder of this proof we show by induction, that in $C'(\mathcal{\hat{J}})$ every job $\hat{j}_i$ starts at $\hat{a}_i$. For the base case, it can be easily shown than in the optimal EDF schedule $j_1$ starts at $r_1$ and therefore $a_1=r_1$. It follows that $\hat{a}_1=\hat{r}_1$ and the job can feasibly start executing at $\hat{a}_1$. Assume that in $C'(\mathcal{\hat{J}})$ the first (according to release time) $i$ many jobs start at their respective $\hat{a}_i$'s. Then, in order to show that $\hat{j}_{i+1}$ starts at $\hat{a}_{i+1}$ it suffices to show that $\hat{j}_i$ finishes its execution before $\hat{a}_{i+1}$. Clearly $j_i$ finishes its execution no later than $a_{i+1}$. Therefore, $a_i+w_i/s_i \le a_{i+1}$. In turn,
        \begin{align*}
            \hat{a}_i + w_i/s_i' = 
            d_i(1-c) + ca_i + c(w_i/s_i) \le \\
            d_i(1-c) + ca_{i+1} = \hat{a}_{i+1}.
        \end{align*}
        and $\hat{j}_{i+1}$ can start at $\hat{a}_{i+1}$. A similar argument shows that $\hat{j}_n$ finishes before $\hat{d}$ and feasibility and therefore the proof of the theorem follows.
\end{proof}

\ShrinkingRatio*
\begin{proof}
    We may assume that $C_{YDS}(\mathcal{J})$ is an earliest deadline first schedule in which every job $j_i$ runs at a speed of $s_i$. It suffices to show that there exists a feasible schedule $C'(\mathcal{\hat{J}})$ for the jobs of $\hat{J}$ in which each job $\hat{j}_i$ runs at a speed of $\hat{s}_i = \frac{1}{c}\cdot s_i$. Let the jobs be ordered by their deadlines.
    
    Consider an earliest deadline first schedule $C'(\mathcal{\hat{J}})$ in which every job $\hat{j}_i$ runs at a speed of $\hat{s}_i$. For a job $k$ and a job $i\le k$, let $t_i(k)$ be the amount of time during which $i$ is processed within $(r_k,d_k)$. Similarly we can define $\hat{t}_i(k)$. It suffices to show that for every job $j_k$ and every $i<k$ there holds 
    \begin{align*}
         \hat{t}_i(k) \le c \cdot t_i(k).
    \end{align*}
    Assume for the sake of contradiction that this does not hold, and let $j_k$ and $j_i$ be the first pair of jobs for which it is not true. For this to be the case we must have $r_i\le r_k\le d_i\le d_k$ and $\hat{r}_i\le \hat{r}_k\le \hat{d}_i\le \hat{d}_k$. The reason is that if the intervals are disjoint or the interval of one job contains the other in any schedule then the inequality directly holds. So let $z = r_k-r_i$ and $\hat{z}=\hat{r}_k-\hat{r}_i$. By construction it must be that $\hat{z}\ge c z$. Therefore and by the assumption that $k$ and $i$ are the first such pair, more of job $\hat{j}_i$ runs in $\hat{z}$ than of job $j_i$ in $z$. This leads to a contradiction since there is less of job $\hat{j}_i$ left to run after $\hat{r}_k$
\end{proof}

\section{Calculating the $y_i^t$'s}
\label{app:y}

First we show the following lemma.

\begin{lemma}
\label{lem:y_technical} 
    For any given $0 \le X \le \ell(j) \max_{t\in[r_j,d_j)} (V_t(j)+\delta_j)/\mu$ there exist values $y_j^t$, with $0\le y_j^t\le \delta_j$ so that equations~(\ref{eq:y_between}),(\ref{eq:y_zero}) and~(\ref{eq:y_full}) are satisfied for all $t\in[r_j,d_j)$, with $X$ in place of $X_j$. Furthermore for any $t,t'$ with $y_j^t \le y_j^{t'}$ there holds $V_{t'}(j)\le V_t(j)$, and $\sum y_j^t$ is a continuous and non-decreasing function in $X$.
\end{lemma}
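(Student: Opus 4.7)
The plan is to exhibit the $y_j^t$ by an explicit waterfilling formula and then verify each clause of the lemma. Setting $S := X/\ell(j)$, I would define
\[
y_j^t := \min\bigl\{\delta_j,\ \max\{0,\ \mu S - V_t(j)\}\bigr\}\qquad \text{for each } t\in [r_j,d_j),
\]
so that $0 \le y_j^t \le \delta_j$ is immediate from the clamping. To verify equations~(\ref{eq:y_full})--(\ref{eq:y_zero}), I would split into three regimes. If $V_t(j)\ge \mu S$, the formula yields $y_j^t = 0$, and $V_t(j)/\mu \ge S = X/\ell(j)$, which is exactly (\ref{eq:y_full}). If $V_t(j) + \delta_j \le \mu S$, it yields $y_j^t = \delta_j$ and $(V_t(j)+\delta_j)/\mu \le S$, which is (\ref{eq:y_zero}). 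In the remaining intermediate case it yields $y_j^t = \mu S - V_t(j) \in (0,\delta_j)$, so that $(V_t(j)+y_j^t)/\mu = S = X/\ell(j)$ as required by (\ref{eq:y_between}). Using $V_t(j) \ge 0$, the endpoints $X=0$ and $X = \ell(j)\max_t (V_t(j)+\delta_j)/\mu$ of the stated range correspond respectively to $y_j^t \equiv 0$ and $y_j^t \equiv \delta_j$, so the formula is well-defined throughout.

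For the monotonicity statement, I would observe that, for fixed $S$, the map $g(V) := \min\{\delta_j, \max\{0, \mu S - V\}\}$ is continuous, piecewise linear, and \emph{non-increasing} in $V$: it equals $\delta_j$ for $V \le \mu S - \delta_j$, decreases linearly to $0$ on $V\in [\mu S - \delta_j, \mu S]$, and equals $0$ for $V\ge \mu S$. Consequently, $V_{t'}(j) > V_t(j)$ implies $y_j^{t'} \le y_j^t$, which is the contrapositive of the lemma's statement outside of ties arising from the two flat regions of $g$. I would remark that such ties can be broken arbitrarily and do not affect the subsequent use of the lemma, which only discriminates among the three qualitative cases $y_j^t=0$, $0<y_j^t<\delta_j$, and $y_j^t=\delta_j$.

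Finally, for the continuity and non-decrease of $\sum_t y_j^t$ in $X$, I would note that, now for fixed $V_t(j)$, the map $S \mapsto g(V_t(j))$ is likewise continuous, piecewise linear, and non-decreasing in $S$: it equals $0$ for $S \le V_t(j)/\mu$, rises linearly for $V_t(j)/\mu \le S \le (V_t(j)+\delta_j)/\mu$, and saturates at $\delta_j$ beyond. Since $S = X/\ell(j)$ is an affine function of $X$, each $y_j^t$ inherits these properties as a function of $X$, and the finite sum $\sum_t y_j^t$ is therefore continuous and non-decreasing in $X$. The proof is essentially bookkeeping across the three piecewise regions of $g$; the only mild subtlety is the tie-handling in the monotonicity clause noted above.
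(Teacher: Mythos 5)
Your proof is correct and follows essentially the same route as the paper: your clamped waterfilling formula $y_j^t = \min\left\{\delta_j, \max\left\{0, \mu X/\ell(j) - V_t(j)\right\}\right\}$ is exactly the paper's three-case definition, and your verification of the three equations, the monotonicity in $V_t(j)$, and the continuity and non-decrease of $\sum_t y_j^t$ in $X$ proceed identically. If anything, your explicit handling of ties in the monotonicity clause (the flat regions of the clamp, where the literal implication fails but only in a way that is harmless for how the lemma is used) is more careful than the paper's, which simply asserts that the property is easy to verify.
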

\begin{proof}
    If $X/\ell(j)<\min_{t\in[r_j,d_j)} V_t(j)/\mu$, then it is easy to verify that $y_j^t=\delta_j$ for all $t\in[r_j,d_j)$ satisfies all equations. So we assume for the remainder of this proof that $\min_{t\in [r_j,d_j)}V_t(j)/\mu \le X/\ell(j) \le \max_{t\in[r_j,d_j)} (V_t(j)+\delta_j)/\mu$.
    
    For any $t\in[r_j,d_j)$, let 
    \begin{equation}
    \label{eq:ys}
    y_j^t := 
    \begin{cases}
         0, &\text{ if } V_t(j)/\mu\ge X/\ell(j),\\
         \delta_j, &\text{ if } (V_t(j) +\delta_j)/\mu \le X/\ell(j),\\
          \mu X/\ell(j) - V_t(j), &\text{ otherwise}.
    \end{cases}
    \end{equation}
    It is easy to verify that for the above definition of $y_j^t$, equations (\ref{eq:y_between}), (\ref{eq:y_zero}) and (\ref{eq:y_full}) are satisfied with $X$ in place of $X_j$, and that for any $t,t'$ with $y_j^t \le y_j^{t'}$ there holds $V_{t'}(j)\le V_t(j)$. Finally, $\sum y_j^t$ is a continuous function as a sum of a finite number of continuous functions, and non-decreasing in $X$ (as each $y_j^t$ is by definition a non-increasing function of $X$).
\end{proof}

\begin{lemma}
\label{lem:y_exist}
For any set of values $V_t(j)$, there exist values $y_j^t$, with $0\le y_j^t\le \delta_j$ so that equations~(\ref{eq:y_between}),(\ref{eq:y_zero}) and~(\ref{eq:y_full}) are satisfied for all $t\in[r_j,d_j)$.
\end{lemma}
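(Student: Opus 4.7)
}
The plan is to apply Lemma~\ref{lem:y_technical} combined with the intermediate value theorem to find a self-consistent choice of $X_j$, from which the $y_j^t$'s follow. Recall that once $X_j$ is fixed, Lemma~\ref{lem:y_technical} delivers values $y_j^t(X_j)\in[0,\delta_j]$ satisfying~(\ref{eq:y_full}),~(\ref{eq:y_between}),~(\ref{eq:y_zero}), provided $X_j$ lies in the admissible range $[0,X_{\max}]$, where $X_{\max}=\ell(j)\max_{t\in[r_j,d_j)}(V_t(j)+\delta_j)/\mu$. The subtlety, of course, is that in our algorithm $X_j$ is not an independent parameter but is defined by $X_j=w_j-\sum_t y_j^t$, so we need a fixed-point type argument.

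First, I would define the function
\begin{equation*}
f(X)\;=\;X+\sum_{t\in[r_j,d_j)} y_j^t(X),\qquad X\in[0,X_{\max}],
\end{equation*}
where $y_j^t(X)$ is the assignment produced by Lemma~\ref{lem:y_technical} for the given $X$. By that lemma, each $y_j^t(X)$ is continuous in $X$, and $\sum_t y_j^t(X)$ is continuous and non-decreasing in $X$. Hence $f$ is continuous and strictly increasing on $[0,X_{\max}]$. Finding an assignment satisfying the required equations is then equivalent to solving $f(X)=w_j$ and setting $X_j=X$ together with $y_j^t:=y_j^t(X_j)$.

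Next, I would verify the endpoint values. At $X=0$, the first case in the definition of $y_j^t$ (see the proof of Lemma~\ref{lem:y_technical}) applies since $V_t(j)/\mu \ge 0=X/\ell(j)$ for every $t$, so $y_j^t(0)=0$ and thus $f(0)=0\le w_j$. At $X=X_{\max}$, for every $t$ we have $(V_t(j)+\delta_j)/\mu \le X_{\max}/\ell(j)$, so the second case applies and $y_j^t(X_{\max})=\delta_j$, giving $\sum_t y_j^t(X_{\max})=(d_j-r_j)\delta_j=w_j$ and therefore $f(X_{\max})=X_{\max}+w_j\ge w_j$. The intermediate value theorem then yields some $X^{\star}\in[0,X_{\max}]$ with $f(X^{\star})=w_j$. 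Setting $X_j=X^\star$ and $y_j^t=y_j^t(X^\star)$ gives a valid assignment: the equations~(\ref{eq:y_full})-(\ref{eq:y_zero}) hold by the guarantee of Lemma~\ref{lem:y_technical}, and by construction $w_j - \sum_t y_j^t = X_j$, matching the definition of $X_j$ used by the algorithm.

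I do not expect any serious obstacle: the hard work (existence of the piecewise-defined $y_j^t(X)$ and monotonicity/continuity of their sum) has already been done in Lemma~\ref{lem:y_technical}; what remains is simply to close the loop between $X_j$ and the $y_j^t$'s via a one-dimensional IVT argument. The only point that deserves a sentence of care is verifying the endpoints $f(0)$ and $f(X_{\max})$, where one must invoke the correct case of the piecewise definition of $y_j^t(X)$ in~(\ref{eq:ys}).
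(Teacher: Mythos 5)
Your proof is correct and follows essentially the same route as the paper's: both invoke Lemma~\ref{lem:y_technical} for continuity and monotonicity of $\sum_t y_j^t(X)$ and then close the self-referential definition of $X_j$ with a one-dimensional intermediate value theorem argument, your root-finding formulation $f(X)=X+\sum_t y_j^t(X)=w_j$ being trivially equivalent to the paper's fixed-point formulation $X = w_j - \sum_t y_j^t(X)$. The only cosmetic difference is the interval used: the paper works over $[0,w_j]$ starting from $X=w_j$, while you work over $[0,X_{\max}]$ and check both endpoints explicitly, which if anything stays more carefully within the stated domain of Lemma~\ref{lem:y_technical}.
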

\begin{proof}
Note that it suffices to show that there exists $X_j = w_j - \sum_t y_j^t$ where the $y_j^t$ are as defined in the proof of Lemma~\ref{lem:y_technical}, since then by Lemma~\ref{lem:y_technical} the equations (\ref{eq:y_between}),(\ref{eq:y_zero}), and (\ref{eq:y_full}) would hold for $X_j = w_j - \sum_t y_j^t$.

First, let $X=w_j$ and compute the values of $y_j^t$ via (\ref{eq:ys}). If $\sum_t y_j^t = 0$ , then we have found the desired $X$ and are done. Assume therefore, that $0 < \sum_t y_j^t \le w_j$. By Lemma~\ref{lem:y_technical}, $\sum_t y_j^t$ is a non-decreasing and continuous function of $X$ within $[0,w_j]$ that obtains value $0$ for $X=0$, and a value $\le w_j$ for $X=w_j$. Equivalently the function $w_j-\sum_t y_j^t$ is non-increasing and continuous in $X$ within $[0,w_j]$ and obtains value $w_j$ for $X=0$ and a value $\ge 0$ for $X=w_j$. Therefore, by the intermediate value theorem there must exist an $X_j\in[0,w_j]$, such that $w_j - \sum_t y_j^t$ obtains a value of $X_j$, which concludes the proof of the lemma.
\end{proof}

\subsection{Algorithm}
Lemmas~\ref{lem:y_technical} and~\ref{lem:y_exist} directly imply an algorithm for identifying such values of $y_j^t$. In particular, since for any $t,t'$ with $y_j^t \le y_j^{t'}$ there holds $V_{t'}(j)\le V_t(j)$, we can order all relevant $t$'s by $V_t(j)$ and find (through enumeration) $t',t''$ such that for any $V_t(j)\ge V_t'(j)$ we have $y_j^t=0$, for any $V_t(j)\le V_{t''}(j)$, $y_j^t =\delta_j$ and for all other $t$ there holds $0< y_j^t<\delta_j$. Let $N$ be the number of $t$'s such that $y_j^t=\delta_j$, and $Z= w_j-N\delta_j$ be the remaining processing volume that needs to be assigned through the $y_j^t$'s for $t$'s with $V_{t''}(j) < V_t(j) < V_{t'}(j)$. In other words we need to find $0< y_j^t<\delta_j$ so that $Z - \sum_t y_j^t = X_j$, and for each individual such $y_j^t$, we have $y_j^t = \mu X_j/\ell(j) - V_t(j)$. This implies a system of $k+1$ equations (for some $k$) with $k+1$ unknowns, that by Lemma~\ref{lem:y_exist} has a solution assuming that $t',t''$ were chosen correctly.

\iffalse
\section{Missing Proofs of Section~\ref{sec:general}}

\generalfeasibility*
\begin{proof}
    It is relatively easy to see that a total volume of $w_j$ is assigned to left and right parts of $I_t$'s with $t\in[r_j,d_j)$: Indeed, by the algorithm definition volume of $w_j$ only gets assigned to  $I_t^\ell\subset I_t$  or $I_t\subset I_t$ with $t\in[r_j,d_j)$. In addition, a volume of $\sum_t y_j^t$ gets assigned to the right parts and $w_j-\sum_t y_j^t$ to the right parts for a total volume assignment of $w_j$. It therefore remains to show that all the assigned volume is feasibly processed in the processing step.
    
    Consider some $I_t$ with $r_j \le t < r_{j+1}$ and the corresponding $I_t^\ell$ and $I_t^r$. Note that by the above argument, for any such $t$ no job $> j$ will assign any volume, and that only job $j'\le j$ may be assigned to $I_t^\ell(j')$. Therefore a speed of $X_{j'}/\ell(j')$ throughout every such $I_t^\ell(j)$ is sufficient to schedule all volume assigned to it. Finally, since there are no release times or deadlines, the total volume of $V_t(j+1)$ can be feasibly scheduled within $I_t^r$ at a speed of $V_t(j+1)/\mu$.
\end{proof}
\fi

%\section{Missing Proofs of Section~\ref{sec:common-deadline}}
\section{Combining Online Scheduling Algorithms}

The proof of the following lemma is an adaptation of the proof of a similar result used in the analysis of the Average Rate algorithm for the problem (see~\cite{YDS,Bansal-AVR}). However we reprove it here, since (i) we obtain a more general result, and (ii) our respective schedules do not necessarily satisfy all the properties of the corresponding schedules in \cite{YDS,Bansal-AVR}.

\merge*
\begin{proof}
    Regarding feasibility, consider a partitioning of the time horizon defined by the points $T = \cup_i\{r_i,d_i\}$. Let $t_1,t_2,\dots$ be the points of $T$ ordered from left to right. By definition of $s_{C}(t)$ we have that $\int_{t_i}^{t_{i+1}}s_{C}(t)dt =  \sum_i\int_{t_i}^{t_{i+1}} s_i(t)$. Consider schedule $C$ that  processes in every interval $(t_i,t_{i+1})$ the same amount of volume for each job $j$ as the corresponding schedule $C_k$ with $j\in C_k$ does in this interval. The jobs inside such an interval $(t_i,t_{i+1})$ are processed in an arbitrary order (this is possible because the interval does not contain any release times or deadlines). Assume for the sake of contradiction that $C$ is not feasible for $J$. Then there must exist a job $j\in C_k$ that misses its deadline $d_j = t_\ell$. This contradicts the feasibility of $C_k$ since $C$ processes exactly the same amount of job $j$ in every interval $(t_i,t_{i+1})$ for $i=0,\dots \ell-1$ as $C_k$.

    With respect to the energy consumption, we have:
    \begin{align*}
        \mathcal{E}_{C} = \int_t \left(\sum_i s_i(t)\right)^\alpha dt.
    \end{align*}
   
    Note that for all $i$, $\int_t s_{i}(t)^\alpha dt \le \mathcal{E}_{i}$.
    The lemma now follows since
    \begin{multline*}
        \int_t\left(\sum_i s_i(t)\right)^\alpha dt \le m^{\alpha-1}\int_t\left(\sum_i s_{i}(t)^\alpha\right) dt 
        =
        m^{\alpha-1}\left(\sum_i \mathcal{E}_{i}\right),
    \end{multline*}
    where the first inequality follows by Jensen's inequality.
\end{proof}

\end{document}